\def\abs#1{\ensuremath{\lvert #1\rvert}}
\newcommand{\VASS}{\mathcal{V}}
\newcommand{\Conf}{\mathit{Conf}}
\newcommand{\blocked}{\mathrm{blocked}}
\newcommand{\weight}{\mathrm{weight}}
\newcommand{\pmin}{\mathrm{pmin}}
\renewcommand{\poly}{\mathrm{poly}}
\newcommand{\set}[1]{{\{ #1 \}}}
\newcommand{\NN}{{\mathbb{N}}}
\newcommand{\runto}[1]{\stackrel{#1}{\rightarrow}}
\title{Coverability in 1-VASS with Disequality Tests} 
\titlerunning{Coverability in 1-VASS with Disequality Tests}
\author{Shaull Almagor}{Technion, Israel}{}{0000-0001-9021-1175}{has received funding from the European Union's Horizon 2020 research and innovation programme under the Marie Sk{\l}odowska-Curie grant agreement No 837327.}
\author{Nathann Cohen}{CNRS \& LRI, France}{}{}{}
\author{Guillermo A. P\'erez}{University of Antwerp,
Belgium}{}{0000-0002-1200-4952}{}
\author{Mahsa Shirmohammadi}{CNRS \& IRIF, Universit\'e de Paris, France}{}{}{}
\author{James Worrell}{University of Oxford, UK}{}{}{Supported by EPSRC Fellowship EP/N008197/1.}
\authorrunning{S.~Almagor, N.~Cohen, G.~A.~P\'erez, M.~Shirmohammadi and J.~Worrell}
\keywords{Reachability, Vector addition systems with states, Weighted graphs}
\begin{document}

\maketitle

\begin{abstract}
 We study a class of reachability problems in weighted graphs with
 constraints on the accumulated weight of paths.  The problems we
 study can equivalently be formulated in the model of vector addition
 systems with states (VASS).  We consider a version of the
 vertex-to-vertex reachability problem in which the accumulated weight
 of a path is required always to be non-negative.  This is equivalent
 to the so-called control-state reachability problem (also called  the coverability problem) for 1-dimensional
 VASS. We show that this problem lies in \NC: the class of
 problems solvable in polylogarithmic parallel time.  In our main result we
 generalise the problem to allow disequality constraints on edges
 (i.e., we allow edges to be disabled if the accumulated weight is
 equal to a specific value).  We show that in this case the
 vertex-to-vertex reachability problem is solvable in polynomial time
 even though a shortest path may have exponential length.  In the
 language of VASS this means that control-state reachability is in
 polynomial time for 1-dimensional VASS with disequality tests.
\end{abstract}


\section{Introduction}\label{sec:intro}
In this paper we study reachability problems in weighted graphs with
constraints on the accumulated weight along a path.  We show that the
vertex-to-vertex reachability problem is in \NC{} if the constraint is
that the accumulated weight must always be non-negative, and the problem is
in polynomial time if we additonally allow disequality constraints on
edges (i.e., constraints that prevent an edge from being taken in a path if
the accumulated weight prior to taking the edge is equal to a specific
value).  In both cases a shortest path satisfying the constraints may
have length exponential in the problem description.  Several related
problems have been studied in the literature, including the problem of
finding a path from a source vertex to target vertex that has a
specific total weight~\cite{NykanenU02}.

The problems we study can naturally be formalised as reachability
problems for types of one-counter machines, and the majority of the
related work has been presented in this context.  Under this
correspondence, the value of the counter represents the accumulated
weight along a path, and tests on the counter encode constraints on
allowable paths.  Algorithmic properties of one-counter machines have
been studied by many authors over several
decades~\cite{BolligQS17,BundalaO17,DemriLS10,FearnleyJ13,FinkelGH13,GollerHOW10,HKOW,LafourcadeLT04}.
The above references are a small subset of the extensive literature on
one-counter machines, but they well illustrate that there are many
variations on the basic model and that these variations can lead to
the model having substantially different algorithmic properties.
Particular features mentioned in the references above, driven by
applications to automated verification and program analysis, include
equality tests, disequality tests, inequality tests, parametric tests,
binary updates, polynomial updates, and parametric updates.  

Analysing the complexity of reachability in the presence of the
features listed above leads to a rich complexity landscape.
It is shown in~\cite{LafourcadeLT04} that control-state reachability
is decidable in \NL{} for a ``plain vanilla'' model of one-counters
machine---namely with a counter taking values in the nonnegative
integers with operations increment, decrement, and zero
testing.  Thinking of one-counter machines as one-dimensional vector
addition systems with states (1-VASS), it is natural to allow the
counter to be updated by adding integer constants in binary.  In this
case, still with equality tests, control-state reachability becomes
\NP-complete~\cite{HKOW}.  The \NP{} upper bound here is non-trivial
since, due to the binary encoding of integers, a computation that
reaches the goal state may have length exponential in the size of the
machine.  If one enriches the model further by introducing
inequality tests (comparing the counter with an integer constant) then
control-state reachability becomes \PSPACE-complete~\cite{FearnleyJ13}.
A model of intermediate complexity is one with equality and
disequality tests (introduced in~\cite{DemriLS10}, with applications
to temporal-logic model checking).  In this case the complexity of
control-state reachability is open (between \NP{} and \PSPACE).

In this paper we consider 1-VASS with disequality tests, but no
equality tests.  In terms of 1-VASS, our main result states that the
control-state reachability problem is solvable in polynomial time for
1-VASS with disequality tests.  This result confirms the intuition
that disequality tests are weaker than equality tests.  The main
technical challenge to obtaining a polynomial-time bound is that a run
witnessing that a given control state is reachable may have length
exponential in the description of the counter machine.  A standard
way to overcome this obstacle in related settings is to show that one
may restrict attention to computations that fit a regular pattern
(usually in terms of iterating a ``small'' number of cycles).  Here
the presence of disequality tests proves to be surprisingly
disruptive: it destroys the monotonicity of the transition relation
and prevents from freely iterating positive-weight cycles.  (For
example, the lack of monotonicity means that it is \coNP{} hard to
determine 
whether, given a control state~$s_0$,   for all counter values~$u\in \mathbb{N}$ the configuration~$(s_0,u)$ is unbounded, i.e., can reach infinitely many configurations---see Figure~\ref{fig:GCD}---whereas the same problem for
1-VASS without tests is easily seen to be decidable in polynomial
time.)  Resolving the complexity of reachability for 1-VASS with both
equality and disequality tests remains open.  We hope that the
techniques developed here can help solve this challenging problem.

To complement our main result, we show that for 1-VASS without tests
control-state reachability (and hence also boundedness) is decidable
in \NC{}, i.e., the subclass of \P{} consisting of problems solvable in
polylogarithmic parallel time.  Problems in \NC{} are in particular
solvable in polylogarithmic space.  Related to this, Rosier and
Yen~\cite{RosierY86} have shown that boundedness for VASS is
\NL-complete in case there are absolute bounds on the dimension and
bit-size of integer vectors.  


 \begin{figure}[t] \begin{center} \scalebox{.85}{\begin{tikzpicture}[->,>=stealth',shorten >=.7pt,node distance=1cm, initial text={},every text node part/.style={align=center}]

\node [state,rectangle,initial right](0,0) (s0)  {\scriptsize{$s_0$}};
\node [state,rectangle](0,0) (s1) [below left=0.4cm and 3.5cm of s0]  
{\scriptsize{$s_1$}};
\node [state,rectangle](0,0) (s2) [below left=0.5cm and 0.4cm of s0]  
{\scriptsize{$s_2$}};
\node [](0,0) (dots) [below right=0.6cm and -0.5cm of s0]  
{\huge{$\cdots$}};
\node [state,rectangle](0,0) (sm) [below right=0.4cm and 2.5cm of s0]  
{\scriptsize{$s_n$}};

\path	 (s0) edge node [midway,above]  {} (s1);
\path	 (s0) edge node [midway,above]  {} (s2);
\path	 (s0) edge node [midway,above]  {} (sm);

\path (s1) edge [out=155, in= 200, looseness=6] node[midway,left]{$c_1$} (s1);
\path (s2) edge [out=155, in= 200, looseness=6] node[midway,left]{$c_2$} (s2);
\path (sm) edge [out=-25, in= 20, looseness=6] node[midway,right]{$c_m$} (sm);

%
\end{tikzpicture}} \end{center} 
\vspace{-.5cm} \caption{A
 1-VASS with disequality tests, derived from a 3-CNF formula
 $\varphi$ having propositional variables $X_1,\ldots,X_m$ and clauses
 $C_1,\ldots,C_n$.   We have states $s_1,\ldots,s_n$---one
 state for each clause---and an initial state $s_0$.
The reduction is such that $(s_0,u)$ is
 unbounded for all $u\in\mathbb{N}$ iff $\varphi$ is
 unsatisfiable.
 Let $p_1,\ldots,p_m$ be the first $m$ primes and
 write $P:=p_1\cdots p_m$ for their product. 
 For all~$u\in \mathbb{N}$, define the propositional assignment~$\mathrm{val}_u: \{
 X_1,\ldots,X_m\} \rightarrow \{0,1\}$ by $\mathrm{val}_u(X_i)=1$ if
 and only if $p_i \mid u$.
    Suppose that
 state $s$ corresponds to a clause $C$ that mentions variables
 $X_{i_1},X_{i_2},X_{i_3}$.  Then we place a self-loop on $s$ with
 increment $c_i:=p_{i_1}p_{i_2}p_{i_3}$ and add disequlity tests on~$s$ (or equivalently on the self-loop on~$s$) for all those values  $u \in \{P,P+1,\ldots,P+p_{i_1}p_{i_2}p_{i_3}-1\}$
 where the assignment $\mathrm{val}_u$
  satisfies the clause~$C$. Given
 $u\in \{0,1,\ldots,P-1\}$,  observe that the configuration~$(s_0,u)$ is bounded iff
 $\mathrm{val}_u$ satisfies $\varphi$ (see Appendix~\ref{apx:gcd} for a complete proof).}\label{fig:GCD} \end{figure}
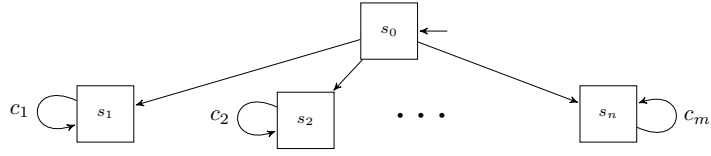

\section{Definitions}\label{sec:def}
We write $\mathbb{N}$ to denote the set of all nonnegative integers
$0,1,2,\dots$ In presenting our results we assume familiarity of the reader with
basic graph theory and computational complexity. 

\subparagraph*{One-Dimensional Vector Addition Systems with States and Tests.}
A \emph{1-VASS with disequality tests} is a tuple $\VASS=(Q,D,\Delta,w)$, where $Q$ is a set of
\emph{states}, $D=\{D_q\}_{q\in Q}$ is a 
collection of cofinite subsets $D_q\subseteq \mathbb{N}$,
$\Delta\subseteq Q\times Q$ is a set of \emph{transitions}, and
$w:\Delta \rightarrow \mathbb{Z}$ is a function that assigns an integer
weight to each transition.
In the special case that each $D_q$ equals  $\mathbb{N}$, we simply call~$\VASS$
a 1-VASS (and we omit the collection $D$).

A \emph{configuration} of $\VASS$ is a pair $(q,z)$ comprising a state~$q\in Q$
and a nonnegative integer~$z\in \mathbb{N}$ referred to as the
\emph{counter value}.
We write $\Conf$ for the set $Q\times \mathbb{N}$ of all configurations.  We
define a partial order on $\Conf$ by $(q,z) \leq (q',z')$ if and only if $q=q'$
and $z\leq z'$.  A configuration $(q,z)$ is \emph{valid} if $z\in D_q$.

A \emph{path} in $\VASS$ is a sequence of states
$\pi=q_1,\ldots,q_n$ such that $(q_i,q_{i+1}) \in \Delta$ for all $i\in\{1,\ldots,n-1\}$.
We sometimes refer to such a path as a $q_1$-$q_n$ path.
Let  $\pi' = p_1,p_2,\ldots,p_m$ be another path
such that $q_n=p_1$, we define
$\pi_1 \cdot \pi_2 := q_1,\ldots,q_n,p_2,\ldots,p_m$.  
Given states $p,q,r$, a set $P$ of $p$-$q$ paths, and a set $R$ of $q$-$r$ paths,
we define $P\cdot R:=\{ \pi \cdot \pi' \mid \pi \in P,\pi' \in R\}$.
The weight of $\pi$ is defined to be $\mathrm{weight}(\pi)
:= \sum_{i=1}^{n-1} w(q_i,q_{i+1})$.  A (possibly empty) prefix of~$\pi$ is said to be \emph{minimal} if it has minimal weight among all
prefixes of $\pi$.  Define $\mathrm{pmin}(\pi)$ to be the
weight of a minimal prefix of $\pi$.

A \emph{run} is a sequence $(q_1,z_1),\ldots,(q_n,z_n)$ of
configurations of $\VASS$ such that there is a path
$\pi=q_1,\ldots,q_n$ with $z_{i+1}=z_i + w(q_i,q_{i+1})$ for
$i=1,\ldots,n-1$.  
We write $(q_1,z_1) \runto{\pi} (q_n,z_n)$ to denote such a run. 
Observe that runs are not allowed to reach negative counter values.
A \emph{valid run} is a run whose configurations are all
valid.  Intuitively, a valid run through $q$ can proceed if and only
if the current counter value is in $D_q$.
We say that a configuration $(q',z')$ is \emph{reachable} from $(q,z)$ if there is a valid run $\pi$ such that $(q,z) \runto{\pi} (q',z')$.

In computational problems  all numbers in the
description of~$\VASS$ are given in binary.
Given a state~$q$ we represent the cofinite set~$D_q$ as the complement of
an explicitly given subset of~$\mathbb{N}$. 
Given this convention, we can assume without loss of generality  that for all states~$q$ the set~$D_q$ is
either~$\mathbb{N}$ or $\mathbb{N}\setminus \{g\}$ for some $g\in\mathbb{N}$; see Appendix~\ref{apx:guard}.
For states~$q$ with $D_q=\mathbb{N}\setminus \{g\}$, we refer to 
the single missing value~$g$ in the domain as the \emph{disequality guard} on~$q$.


\subparagraph*{The Coverability and Unboundedness Problems.}
Let $\VASS=(Q,\Delta,D,w)$ be a 1-VASS with disequality tests, and let $s$ and $t$ be
two  distinguished states  of~$\VASS$.
The \emph{Coverability Problem} asks whether
there exists a valid run in $\VASS$ from $(s,0)$ to $(t,z)$ for some~$z\in
\mathbb{N}$
(in which case we say that $(s,0)$ can \emph{cover} $t$).
The \emph{Unboundedness Problem} asks whether the set of configurations reachable from $(s,0)$ is
infinite (in which case we say that $(s,0)$ is \emph{unbounded}).

The Coverability problem reduces to the Unboundedness problem by, intuitively, 
forcing~$(t,0)$ to be unbounded using a positive cycle, and removing all states that 
cannot reach~$t$ in the underlying graph of $\VASS$. 
In fact, the following holds.

\begin{lemma}\label{lem:cov2unbound}
  There is an $\NC^2$-computable many-one reduction from the
  Coverability Problem to the Unboundedness Problem.
\end{lemma}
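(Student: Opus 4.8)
The plan is to construct, from the Coverability instance $(\VASS,s,t)$, a new 1-VASS with disequality tests $\VASS'$ (retaining the same start configuration $(s,0)$) such that $(s,0)$ can cover $t$ in $\VASS$ if and only if $(s,0)$ is unbounded in $\VASS'$. I would build $\VASS'$ by two modifications. First, using directed-graph reachability, compute the set $Q'\subseteq Q$ of states from which $t$ is reachable in the underlying graph, and discard every state not in $Q'$ together with its incident transitions. Second, install a ``pump'' at $t$: add a fresh state $t^\dagger$ with $D_{t^\dagger}=\mathbb{N}$, a transition $(t,t^\dagger)$ of weight $0$, and a self-loop $(t^\dagger,t^\dagger)$ of weight $+1$. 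Using a fresh state rather than a self-loop directly on $t$ avoids interfering with a possible disequality guard on $t$: the move into $t^\dagger$ is enabled exactly when the configuration $(t,z)$ is valid, i.e.\ exactly when $t$ is covered, after which the counter can be incremented forever. If $s\notin Q'$ I would instead output a fixed, trivially bounded instance, since then $t$ cannot be covered.

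For correctness, observe first that deleting states outside $Q'$ and adding $t^\dagger$ does not change whether $(s,0)$ covers $t$: any valid run witnessing coverability ends at $t$, so each of its states can still reach $t$ and hence lies in $Q'$, while $t^\dagger$ plays no role. The forward direction is then immediate: if $(s,0)$ reaches some valid $(t,z)$, it can step to $(t^\dagger,z)$ and iterate the $+1$ self-loop, visiting infinitely many configurations.

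The substantial direction, which I expect to be the main obstacle, is to show that unboundedness of $(s,0)$ in $\VASS'$ forces $t$ to be covered, because here the non-monotonicity caused by disequality tests is a real danger. Since there are finitely many states, unboundedness gives, by pigeonhole, a single state $q$ at which infinitely many distinct counter values are reachable. If $q=t^\dagger$ we are done, as $t^\dagger$ is reachable only through $t$. Otherwise $q\in Q'$, so there is a simple path $\pi=q_0,\dots,q_k$ in the underlying graph with $q_0=q$, $q_k=t$, and $k<|Q'|$. The key observation is that a disequality guard forbids only a \emph{single} counter value per state: fixing $\pi$, a run following $\pi$ from $(q,z)$ is blocked only if some intermediate value $z+s_i$ (where $s_i$ is the partial weight sum along $\pi$) hits the guard of $q_i$, or the final value hits the guard of $t$, or the counter goes negative. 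Each guard excludes at most one value of $z$, so only finitely many ``bad'' $z$ are ruled out, and taking $z$ larger than $|Q'|\cdot\max_{\delta}|w(\delta)|$ (which bounds every partial sum) rules out negativity. As infinitely many reachable values occur at $q$, I can choose a reachable $z$ that is both large and avoids every bad value; appending $\pi$ to a valid run reaching $(q,z)$ then yields a valid run covering $t$.

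Finally, for the complexity bound, the only nontrivial computation is determining $Q'$, i.e.\ which states reach $t$; this is directed reachability, computable via transitive closure (iterated Boolean matrix squaring) in $\NC^2$, and it simultaneously settles the special case $s\notin Q'$. All remaining edits—removing states and transitions and inserting the constant-size $t^\dagger$ gadget—are purely local and clearly in $\NC$. Hence the entire reduction is $\NC^2$-computable.
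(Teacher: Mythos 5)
Your proposal is correct and follows essentially the same route as the paper's proof: prune states that cannot reach $t$, attach a guard-free $+1$-pumping state behind $t$, and for the converse direction use the fact that from a state where infinitely many counter values are reachable one can pick a value large enough (and avoiding the finitely many values excluded by disequality guards along a simple path) to lift that path to a valid run covering $t$, with the whole construction dominated by an $\NC^2$ graph-reachability computation.
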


Henceforth, we focus on the complexity of deciding the Unboundedness Problem. In
Section~\ref{sec:1vassdiseq} we prove that the Unboundedness Problem for 1-VASS
with disequality tests is decidable in polynomial time. Since $\NC^2
\subseteq \P$, by Lemma~\ref{lem:cov2unbound} we also have that
the Coverability Problem in this setting
is decidable in polynomial time.
In Section~\ref{sec:1vass} we prove that the Unboundedness Problem for 1-VASS
(without disequality tests) is in $\NC^2$, and 
we deduce  that the Coverability Problem for 1-VASS is decidable in $\NC^2$.

\section{Unboundedness for 1-VASS with Disequality Tests}
\label{sec:1vassdiseq}
Fix a $1$-VASS $\VASS = (Q,D,\Delta,w)$ with disequality tests and a
distinguished state $s \in Q$. We are interested in determining whether the
configuration $(s,0)$ is unbounded.

For a (possibly infinite) path $\pi=q_1,q_2,\ldots$, denote by $\blocked(\pi)$
the set of $z\in\mathbb{N}$ such that the unique induced run from $(q,z)$ either contains a
negative counter value or violates a disequality guard. That is, $\pi$ does
not lift to a valid run from the configuration $(q_1,z)$.

\begin{example}
In Figure~\ref{fig:example}, 
since~$41$ is the  guard on~$s_5$ the run $(s_4,93),(s_5,41),(s_6,93)$
is not valid and $93 \in \blocked(s_4,s_5,s_6)$. 
Observe that $\blocked(s_4,s_5,s_6)=[0,52) \cup \{90,93,96\}$ and
$\blocked((s_4,s_5,s_6)^{\omega})=
[0,52) \cup \{52\leq z\leq 96 \mid z \equiv 0,3,6 \pmod{9}\}$.
\end{example}

Recall that for a path~$\pi$, $\pmin(\pi)$ is the weight of a
minimum-weight prefix of $\pi$. 
Let $Q_+ \subseteq Q$ be the set of
states $q \in Q$ such that there is a positive-weight simple cycle on
$q$ in the underlying graph of $\VASS$.
  For $q \in Q_+$ we pick a
simple cycle $\gamma_q$ such that $\pmin(\gamma_q) \geq \pmin(\gamma)$
for any other positive-weight simple cycle~$\gamma$ on~$q$; write
$W_q$ for $\weight(\gamma_q)$.\footnote{Note that $\gamma_q$ does not necessarily have maximal weight $W_q$ among the positive simple cycles on $q$.}
Define $\Conf_+:=\{
(q,z) \in \Conf \mid q\in Q_{+}, \,{z +\pmin(\gamma_q)} \geq 0\}$.

\begin{figure}[t]
\begin{center}
 \scalebox{.85}{\begin{tikzpicture}[->,>=stealth',shorten >=.7pt,node distance=1cm, initial text={},every text node part/.style={align=center}]

\node [state,rectangle,initial](0,0) (s0)  {\scriptsize{$s_0$}};
\node [state,rectangle](0,0) (s1) [right=1cm of s0]  
{\scriptsize{$s_1$}\\ \scriptsize{$\neq 60$}};
\node [state,rectangle](s2) [above=.6cm of s1] {\scriptsize{$\!s_2\!$}};
\node [state,rectangle](s3) [right=1cm of s1] {\scriptsize{$s_3$}\\ \scriptsize{$\neq 30$}};
\node [state,rectangle](s4) [right=1cm of s3] {\scriptsize{$s_4$}\\ \scriptsize{$\neq 90$}};
\node [state,rectangle](s5) [above left=.6cm and .3cm of s4] {\scriptsize{$s_5$}\\ \scriptsize{$\neq 41$}};
\node [state,rectangle](s6) [above right=.6cm and .3cm of s4] {\scriptsize{$s_6$}\\ \scriptsize{$\neq 96$}};
\node [state,rectangle](s7) [right=1cm of s4] {\scriptsize{$s_7$}\\ \scriptsize{$\neq 70$}};
\node [state,rectangle](s8) [right=1cm of s7] {\scriptsize{$s_8$}\\ \scriptsize{$\neq 80$}};
\node [state,rectangle](s9) [right=1cm of s8] {\scriptsize{$s_9$}\\ \scriptsize{$\neq 80$}};
\node [state,rectangle](s10) [right=1cm of s9] {\scriptsize{$s_{10}$}\\ \scriptsize{$\neq 120$}};
\node [state,rectangle](s11) [above left=.6cm and .35cm of s10] {\scriptsize{$s_{11}$}\\ \scriptsize{$\neq 43$}};
\node [state,rectangle](s13) [above right=.6cm and .35cm of s10] {\scriptsize{$s_{13}$}\\ \scriptsize{$\neq 130$}};
\node [state,rectangle](s12) [above=1.5cm of s10] {\scriptsize{$s_{12}$}\\ \scriptsize{$\neq 130$}};

\path	 (s0) edgenode [midway,below]  {\scriptsize{$12$}} (s1);
\path	 (s1) edge [bend left] node [midway,left]  {\scriptsize{$-12$}} (s2);
\path	 (s2) edge [bend left] node [midway,right]  {\scriptsize{$18$}} (s1);
\path	 (s1) edge node [midway,below]  {\scriptsize{$12$}} (s3);
\path	 (s3) edge node [midway,below]  {\scriptsize{$30$}} (s4);
\path	 (s4) edge [bend left] node [near  end,right]  {\scriptsize{$-52$}} (s5);
\path	 (s5) edge [bend left] node [midway,above]  {\scriptsize{$52$}} (s6);
\path	 (s6) edge [bend left] node [near start,left]  {\scriptsize{$9$}} (s4);
\path	 (s4) edge node [midway,below]  {\scriptsize{$4$}} (s7);
\path	 (s7) edge node [midway,below]  {\scriptsize{$4$}} (s8);
\path	 (s8) edge node [midway,below]  {\scriptsize{$-3$}} (s9);
\path	 (s9) edge node [midway,below]  {\scriptsize{$17$}} (s10);
\path	 (s10) edge [bend left] node [near  end,right]  {\scriptsize{$-80$}} (s11);
\path	 (s11) edge [bend left] node [right,above]  {\scriptsize{$81$}} (s12);
\path	 (s12) edge [bend left] node [left,above]  {\scriptsize{$3$}} (s13);
\path	 (s13) edge [bend left] node [near start,left]  {\scriptsize{$6$}} (s10);

\end{tikzpicture}}
\end{center}
\vspace{-.4cm}
\caption{A 1-VASS with disequality tests. 
Disequality guards are denoted by $\neq$.
For example, in state~$s_1$ the set $D_{s_1}$ is 
$\mathbb{N} \setminus \{60\}$, and no run
goes through~$s_1$ if its current counter 
value is~$60$.}\label{fig:example} 
\end{figure}
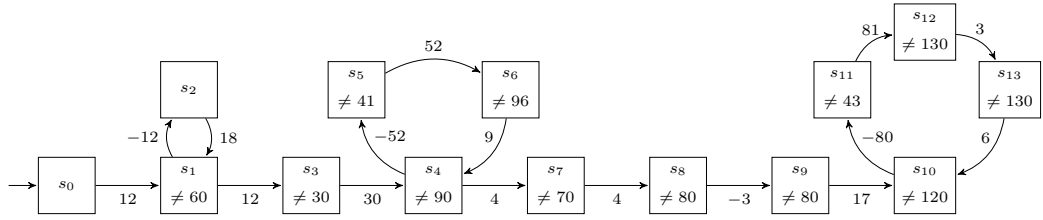

Define a path to be \emph{primitive} if no proper infix is a positive
cycle (note though that a primitive path may itself be a positive
cycle). 
 We say that a run is primitive if the underlying path is
primitive.  Observe that if $\rho$ is a valid run, none of whose internal
configurations (i.e. excluding the first and last configurations) lies in $\Conf_+$, then $\rho$ is primitive.

\begin{example}
 In Figure~\ref{fig:example}, for $s_1 \in Q_+$ we pick the simple cycle $\gamma_{s_1}=s_1,s_2,s_1$ with $W_{s_1}=6$.
  Since $\pmin(\gamma_{s_1})=-12$, we have that $\{z\mid (s_1,z) \in \Conf_+ \} =
  [12,\infty)$.
  Moreover, the path 
  $s_4,s_5,s_6,s_4$ is primitive, but $s_1,s_2,s_1,s_3$ is not primitive.
\end{example}

\begin{proposition}
	\label{prop: unbounded iff confplus unbounded}
A configuration~$(s,0)$ is unbounded if, and only if,
 $(s,0)$ can reach an unbounded configuration in  $\Conf_+$. 	
\end{proposition}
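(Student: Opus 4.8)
The plan is to prove the two implications separately, with essentially all the work in the ``only if'' direction. The ``if'' direction is immediate: suppose $(s,0)$ reaches some unbounded configuration $(q,z)\in\Conf_+$ via a valid run. Since reachability is transitive, every configuration reachable from $(q,z)$ is also reachable from $(s,0)$, and as there are infinitely many such configurations, $(s,0)$ is unbounded. Note this uses no special property of $\Conf_+$; the membership in $\Conf_+$ is extra information that the ``only if'' direction supplies.

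For the ``only if'' direction, assume $(s,0)$ is unbounded. Set $N:=\abs{Q}$, let $W$ be the largest absolute value of a transition weight, and let $G$ be the largest disequality guard (taking $G:=0$ if there are none). Since there are only finitely many states, the infinitely many reachable configurations force the counter to take arbitrarily large values; in particular, with threshold $T:=G+N\cdot W$, there is a valid run $\rho=(q_0,z_0),\dots,(q_k,z_k)$ from $(s,0)$ with $z_k>T+N\cdot W$. Let $i^{*}$ be the last index with $z_{i^{*}}\le T$ (well defined since $z_0=0\le T$ and $z_k>T$), and consider the suffix run $\sigma$ from position $i^{*}$ to $k$. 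By maximality of $i^{*}$, every configuration of $\sigma$ after its first has counter value exceeding $T\ge G$, so $\sigma$ visits no guard past its first step, and the net weight of $\sigma$ is $z_k-z_{i^{*}}>N\cdot W$.

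The combinatorial core is to extract a positive simple cycle from $\sigma$. I would iteratively remove contiguous simple cycles from the underlying walk of $\sigma$ until only a simple path remains; the net weight of $\sigma$ then equals the weight of this simple path plus the sum of the weights of the removed simple cycles. A simple path has at most $N$ edges, so its weight is at most $N\cdot W$, whereas the net weight of $\sigma$ exceeds $N\cdot W$; hence the removed cycles have positive total weight, and at least one removed simple cycle $\gamma'$ is positive. Its base state $q$ therefore lies in $Q_+$, and since $\gamma'$ is a cycle, $q$ occurs at least twice among the states of $\sigma$, so it occurs at some position $j>i^{*}$, where $z_j>T$. Thus $(s,0)$ reaches $(q,z_j)$ with $q\in Q_+$ and $z_j>T=G+N\cdot W$. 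Because $\gamma_q$ is a simple cycle we have $\abs{\pmin(\gamma_q)}\le N\cdot W$, so $z_j+\pmin(\gamma_q)>G\ge 0$; this gives both $(q,z_j)\in\Conf_+$ and the fact that iterating $\gamma_q$ from $(q,z_j)$ keeps the counter above $G$ (hence above every guard) and nonnegative throughout. Since $W_q=\weight(\gamma_q)>0$, this iteration yields infinitely many configurations, so $(q,z_j)$ is unbounded, completing the direction.

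I expect the main obstacle to be the disequality guards, which destroy monotonicity: a cycle that is traversable by weight may still be blocked by a guard, so one cannot iterate positive cycles freely. The plan sidesteps this by isolating the high-counter suffix $\sigma$, on which the counter stays above every guard and the guards become inert, and by relying on the defining inequality $z+\pmin(\gamma_q)\ge 0$ of $\Conf_+$ to ensure the chosen cycle never drives the counter negative. A secondary delicate point is guaranteeing that the positive simple cycle produced by the decomposition is genuinely witnessed at a high counter value along $\rho$; this follows because the entire suffix $\sigma$ lies above $T$, so any state it visits is reached with counter exceeding $T$.
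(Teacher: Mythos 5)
Your proof is correct and follows essentially the same route as the paper's: both extract a positive simple cycle from a run witnessing unboundedness (you via a weight argument on a high-counter suffix, the paper via a pigeonhole on a revisited state followed by the same simple-cycle decomposition), and both then observe that above a suitable threshold the disequality guards become inert, so the base configuration of that cycle lies in $\Conf_+$ and is unbounded. Your explicit threshold $G+\abs{Q}\cdot W$ plays exactly the role of the paper's $\max(\blocked(\gamma_p^{\omega}))$, which is finite for the same reason.
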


In order to decide whether~$(s,0)$ is unbounded,
by Proposition~\ref{prop: unbounded iff confplus unbounded},
it suffices to compute the set of unbounded configurations in $\Conf_+$ 
and determine whether $(s,0)$ can reach this set.
Define $\Conf_\infty \subseteq \Conf_+$ to be the set of all
unbounded configurations in $\Conf_+$.
Observe that every configuration~$(q,z)\in\Conf_+$ with 
$z\notin \blocked(\gamma_q^\omega)$ can take the cycle
$\gamma_q$  arbitrarily many times and is thus included in $\Conf_\infty$.
However, even if $z\in \blocked(\gamma_q^\omega)$, it may still be the case that
$(q,z)$ is unbounded, by traversing more complicated paths.

\begin{example}
  In Figure~\ref{fig:example}, all  configurations~$(s_4,z)$ with $z$ in
  $\mathbb{N} \setminus \blocked((s_4,s_5,s_6)^{\omega})= \{52\leq z\leq 96 \mid
  z \not \equiv 0,3,6 \pmod{9}\} \cup (96,\infty)$ are trivially unbounded and
  thus included in~$\Conf_{\infty}$.  
  It will transpire that
  $\{s_4\}\times \{54,60,63,69\} \subseteq \Conf_{\infty}$ even though
  $\{54,60,63,69\} \in \blocked((s_4,s_5,s_6)^{\omega})$.
\end{example}

In order to reason about the aforementioned complicated paths, we proceed as follows. In Section~\ref{subsec: classes and chains} we introduce residue classes and chains, which form a partition of $\Conf_+$, and are the building blocks of our analysis. In Section~\ref{subsec: inductive construction} we characterize $\Conf_\infty$ as the limit of an inductive construction. This  enables us to reason about the structure of $\Conf_\infty$ in Section~\ref{subsec:struct}. Finally, in Section~\ref{subsec:algo} we show how to compute $\Conf_\infty$ and  decide unboundedness.

\subsection{Residue Classes and Chains}
\label{subsec: classes and chains}
Given $q \in Q_+$ and $0
\leq r < W_q$, we call the set of configurations $\{ (q,z) \in \Conf_+ \mid
z\equiv r \pmod{W_q} \}$ a \emph{$q$-residue class}.  We simply speak
of a \emph{residue class} if we do not want to specify the state $q$.
Given a $q$-residue class $R$, a set $C\subseteq R$ is called
a \emph{$q$-chain} if it is a maximal subset of $R$ for the property
that every pair of configurations $(q,z),(q,z') \in C$ with $z<z'$ is
connected by a valid run obtained by iterating the cycle $\gamma_q$. Again,
we speak of a \emph{chain} if we do not want to specify the state $q$.

\begin{figure}[t]
\begin{center}
 \scalebox{.85}{\begin{tikzpicture}[->,>=stealth',shorten >=.7pt,node distance=1cm, initial text={},every text node part/.style={align=center},yscale=.8]

\node [state,rectangle](0,0) (q1) 
{\scriptsize{$s_1 ~~(\neq 60)$}};
\node [state,rectangle] (q4) [right=2.2cm of q1] 
{\scriptsize{$s_4~~(\neq 90,93,96)$}};
\node [state,rectangle] (q10) [right=2.2cm of q4] 
{\scriptsize{$s_{10}~~(\neq 120,123,126,129)$}};
\path	 (q1) edge [loop above] node {6} (q1);
\path	 (q4) edge [loop above] node {9} (q4);
\path	 (q10) edge [loop above] node {10} (q10);
%


\draw[dotted, thick,blue!90!gray] (.4,-.8)-- ++ (0,.3);
\draw[-,dotted, thick,magenta!90!gray] (.4,-9.9)-- ++ (0,8.2);

\foreach \i in {0,...,4}{
 \draw[dotted, thick,blue!90!gray] (-.35+.15*\i,-9.9)-- ++ (0,9.4);
}
\fill [blue!60!white] (.4,-.8) circle (.8mm) node {\textcolor{blue!60!white}{\scriptsize{$ \quad \quad 66$}}};
\draw[-,red, thick] (-.5,-10)node[left]{\scriptsize{$12\!$}}-- ++ (1.2,0) ;

\foreach \i/\j in {0/12,1/18,2/24,3/30,4/36,5/42,6/48,7/54,8/60 } {
      \fill [blue!60!white] (-.35,\i-9) circle (.8mm);
      \fill [blue!60!white] (-.2,\i-9.15) circle (.8mm);
      \fill [blue!60!white] (-.05,\i-9.3) circle (.8mm);
		\fill [blue!60!white] (.1,\i-9.45) circle (.8mm);
      \fill [blue!60!white] (.25,\i-9.6) circle (.8mm);
      \draw [magenta,fill=magenta!10!white] (.4,\i-9.75) 
		circle (.8mm) node {\textcolor{magenta}{\scriptsize{$ \quad \quad$\j}}};
}
	

\foreach \i in {0,1,3,4,6,7}{
\draw[dotted, thick,blue!90!gray] (3.5+.15*\i,-8.3)-- ++ (0,7.8);
}
\foreach \i in {2,5,8}{
\draw[-,dotted, thick,magenta] (3.5+.15*\i,-8.3)-- ++ (0,6.35-.15*\i);
}
\draw[dotted, thick,blue!90!gray] (3.8,-.8)-- ++ (0,.3);
\fill [blue!60!white] (3.8,-.8) circle (.8mm);
\draw[dotted, thick,blue!90!gray] (4.25,-1.25)-- ++ (0,.75);
\fill [blue!60!white] (4.25,-1.25) circle (.8mm);
\draw[dotted, thick,blue!90!gray] (4.7,-1.7)-- ++ (0,1.2);
\fill [blue!60!white] (4.7,-1.7) circle (.8mm)
node {\textcolor{blue!60!white}{\scriptsize{$ \quad \quad 99$}}};
\fill [blue!60!white] (3.5,-8) circle (.8mm);
\fill [blue!60!white] (3.65,-8.15) circle (.8mm);
\draw[-,red, thick] (3.3,-8.4) node[left]{\scriptsize{$52\!$}}-- ++ (1.65,0) ;
%
\foreach \i/\j in {0/54,1/63,2/72,3/81,4/90}{

      \fill [blue!60!white] (3.5,1.2*\i-6.65) circle (.8mm);
      \fill [blue!60!white] (3.65,1.2*\i-6.8) circle (.8mm);
		 \draw [magenta,fill=magenta!10!white] (3.8,1.2*\i-6.95) 
		circle (.8mm);
			\fill [blue!60!white] (3.95,1.2*\i-7.1) circle (.8mm);
      \fill [blue!60!white] (4.1,1.2*\i-7.25) circle (.8mm);
		\draw [magenta,fill=magenta!10!white] (4.25,1.2*\i-7.4) 
		circle (.8mm);
	   \fill [blue!60!white] (4.4,1.2*\i-7.55) circle (.8mm);
      \fill [blue!60!white] (4.55,1.2*\i-7.7) circle (.8mm);
      \draw [magenta,fill=magenta!10!white] (4.7,1.2*\i-7.85) 
		circle (.8mm) node {\textcolor{magenta}{\scriptsize{$ \quad \quad$\j}}};
}

\draw[dotted, thick,blue!90!gray] (11.5-3,-.8)-- ++ (0,.3);
\fill [blue!60!white] (11.5-3,-.8) circle (.8mm);
\draw[dotted, thick,blue!90!gray] (11.95-3,-1.25)-- ++ (0,.75);
\fill [blue!60!white] (11.95-3,-1.25) circle (.8mm);
\draw[dotted, thick,blue!90!gray] (12.4-3,-1.7)-- ++ (0,1.2);
\fill [blue!60!white] (12.4-3,-1.7) circle (.8mm);
\draw[dotted, thick,blue!90!gray] (12.85-3,-2.15)-- ++ (0,1.65);
\fill [blue!60!white] (12.85-3,-2.15) circle (.8mm)
node {\textcolor{blue!60!white}{\scriptsize{$ \quad \quad 130$}}};

\draw[-,red, thick] (11.3-3,-9.4) node[left]{\scriptsize{$80\!$}}-- ++ (1.8,0);

\foreach \i in {0,1,3,4,6,7}{
\draw[dotted, thick,blue!90!gray] (11.65-3+.15*\i,-9.3)-- ++ (0,8.8);
}
\foreach \i in {0,3,6,9}{
\draw[-,dotted, thick,magenta] (11.5-3+.15*\i,-9.3)-- ++ (0,7-.15*\i);
}
\foreach \i/\j in {-1/80,0/90,1/100,2/110,3/120}{
		\draw [magenta,fill=magenta!10!white] (11.5-3,1.35*\i-6.4) circle (.8mm);
      \fill [blue!60!white] (11.65-3,1.35*\i-6.55) circle (.8mm);
      \fill [blue!60!white] (11.8-3,1.35*\i-6.7) circle (.8mm);
      \draw [magenta,fill=magenta!10!white] (11.95-3,1.35*\i-6.85) circle (.8mm);
		\fill [blue!60!white] (12.1-3,1.35*\i-7) circle (.8mm);
      \fill [blue!60!white] (12.25-3,1.35*\i-7.15) circle (.8mm);
      \draw [magenta,fill=magenta!10!white] (12.4-3,1.35*\i-7.3) circle (.8mm);
		\fill [blue!60!white] (12.55-3,1.35*\i-7.45) circle (.8mm);
      \fill [blue!60!white] (12.7-3,1.35*\i-7.6) circle (.8mm);
      \draw [magenta,fill=magenta!10!white] (12.85-3,1.35*\i-7.75) 
    circle (.8mm) node {\textcolor{magenta}{\scriptsize{$ \quad \quad$\j}}};
}

\end{tikzpicture}}
\end{center}
\vspace{-.5cm}
\caption{We focus on states~$s_1$, $s_4$, and $s_{10}$ in the 1-VASS in Figure~\ref{fig:example},
each of which lies on a  simple positive cycle.
We also indicate which counter values
prevent taking the associated positive cycle.
For example, state~$s_4$
has the simple  cycle~$\gamma_{s_4}$ with  $W_{s_4}=9$ and
taking $\gamma_{s_4}$ from~$\{s_4\} \times \{90,93,96\}$ is not allowed 
due to disequality guards along~$\gamma_{s_4}$.
The columns underneath each 
state represent residue classes of that state in~$\Conf_{+}$.
We colour all unbounded chains in blue and all bounded chains in pink;
thus all blue configurations form the set~$U_0$.  
}\label{fig:U0} 
\end{figure}
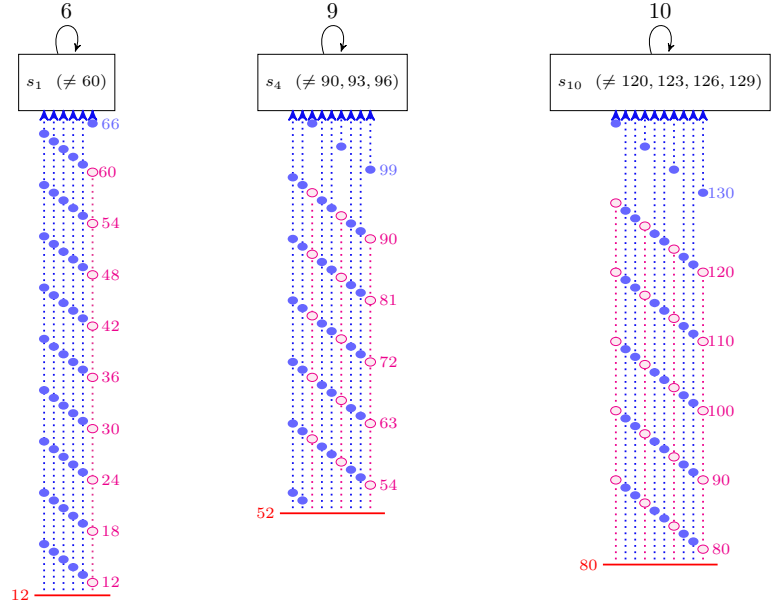

We draw a distinction between \emph{bounded chains} and \emph{unbounded chains},
where a chain is bounded if and only if the associated set of counter values is
bounded.  An unbounded $q$-chain $C$ is contained in $\Conf_\infty$ since the
cycle $\gamma_q$ can be taken arbitrarily many times from any configuration in
$C$ to yield a valid run.

\begin{remark}\label{lem:2q}
  Let us write $\gamma_q = q_1, q_2, \dots$
  For each $q_1$-residue class $R$, every $z$ such that 
  $z + \mathrm{weight}(q_1,\dots,q_i) \not\in D_{q_i}$, for some $q_i$,
  induces at most two bounded chains. Namely, the set of configurations
  below $(q,z)$ form a chain; and the singleton $\set{(q,z)}$ is also
  (vacuously) a chain. Note that every residue class also has one unbounded
  chain. That is, the set of configurations above~$(q,z)$ with $z$ the maximal
  ``induced guard'' on~$q$. Since there are at most $|Q|$ guards, each residue
  class decomposes as a disjoint union of at most~$2\abs{Q}$ bounded chains
  and a single unbounded chain.
\end{remark}
Intuitively, within each bounded chain we can iterate the cycle $\gamma_q$ 
until hitting a guard.
We call a residue class
$R$ \emph{trivial} if it consists solely of a single unbounded chain.
Note that the union of all bounded $q$-chains is equal to  $\Conf_{+} \cap \{q\}
\times \blocked(\gamma_q^{\omega})$.

\begin{example}
  As indicated in Figure~\ref{fig:U0} for the running example,  the residue
  classes $\{s_4\}\times (52+i+9\NN)$ with $i\in \{0,1,3,4,6,7\}$ are indeed
  trivial, while each residue class $\{s_4\}\times (52+i+9\NN)$ with $i\in
  \{2,5,8\}$ consists of two bounded chains  $\{s_4\}\times \{52 \leq z < 88+i
  \mid  z \equiv i \pmod{9} \}$ and $\{s_4\}\times \{88+i\}$, and a single
  unbounded chain $\{s_4\}\times (88+i+9\NN)$.
\end{example}

One of the main ideas in this section is to show that a configuration is
unbounded if and only if it can reach an unbounded chain via a valid run whose
underlying path $\pi$ has the form
\[
  \pi = \pi_0 \cdot \gamma_{q_1}^{n_1} \cdot \pi_1 \cdots \pi_{k-1} \cdot
  \gamma_{q_k}^{n_k} \cdot \pi_{k} \, ,
\]
where $\pi_0,\ldots,\pi_{k}$ are primitive paths and $n_1,\ldots,n_k$ are
non-negative integers.  Moreover, we give a polynomial bound on  the length of
the~$\pi_i$ and the magnitude of $k$ in terms of the size of the underlying
1-VASS  (in general, the exponents $n_i$ may be exponential in the size of the
1-VASS).  We also  show how to detect the existence of such a path in
polynomial time.

Recall the structure of $\Conf$ as a partially ordered set. 
We will use standard order-theoretic terminology and notation to refer
to sets of configurations: in particular given  sets of configurations
$S, S' \subseteq \Conf$, we say that $S$ is \emph{downward closed in $S'$}
if for all $(q,z) \in S\cap S'$ and $(q,z') \in S'$ with $z'\leq z$, we have 
$(q,z') \in S$.

\subsection{Inductive Characterization of $\Conf_\infty$}
\label{subsec: inductive construction}
We now give an inductive backward-reachability construction of  the
set of all configurations in $\Conf_+$ that can reach an unbounded
chain. 
Since unbounded configurations can, in particular, 
reach unbounded chains (as above the maximal disequality guard, all chains are unbounded), this set is exactly $\Conf_{\infty}$.

In order for our inductive construction to converge in a polynomial number of
steps, we essentially consider meta-transitions of the form
$\gamma_q^k \cdot \pi$ for $\gamma_q$ a simple cycle,
$k\in \mathbb{N}$, and~$\pi$ a primitive path.  Formally, we define an
increasing sequence $U_0 \subseteq U_1 \subseteq U_2 \subseteq \cdots$
of subsets of $\Conf_+$ such that
$\bigcup_{n\in \mathbb{N}} U_n = \Conf_\infty$.  Define $U_0$ to be
the union of the collection of unbounded chains.  
Given $n\in\mathbb{N}$ we inductively construct $U_{n+1} $ as follows.
First, define $U_n' \subseteq \Conf_+$ 
as the set of
configurations $(q,z) \not\in U_n$ whose distance to $U_n$ is minimal
among all configurations in $\Conf_+ \setminus U_n$ (here the distance
of a configuration $(q,z)$ to $U_n$ is the length of the shortest
valid run from $(q,z)$ to $U_n$). 
Now define $U_{n+1} \subseteq \Conf_+$ to be the smallest set
  such that $U_n,U_n'\subseteq U_{n+1}$ and $U_{n+1}\cap C$ is downward closed
in every chain $C$.
Then $\bigcup_{n\in\mathbb{N}} U_n$ is the set of configurations in $\Conf_+$ that can 
reach an unbounded chain which, as noted above, is equal to $\Conf_\infty$.

\begin{remark}
	\label{rmk: Un primitive to Un+1}
By definition, a shortest run from a configuration
          $(q,z)\in U'_{n+1}\setminus U_n$ to $U_n$ has no internal configurations in
          $\Conf_+$, and is therefore
          primitive. 
\end{remark}

\begin{figure}[t]
\begin{subfigure}[t]{0.64\textwidth}
  \centering
  \scalebox{.8}{\begin{tikzpicture}[->,>=stealth',shorten >=.7pt,node distance=1cm, initial text={},every text node part/.style={align=center},yscale=.85]


\draw[->,green!70!black](3.8,-4.4+3-8.95)--  (11.5,-3.2+1.35*1-8.45) ;
\draw[->,green!70!black](4.25,-4.4+3-9.4)-- node[near end,right]{\scriptsize{$22$}} (11.95,-3.2+1.35*1-8.9);
\draw[->,green!70!black](4.7,-4.4+3-9.85)-- node[near end,right]{\scriptsize{$22$}} (12.4,-3.25+1.35*1-9.35);
\draw[->,green!70!black](3.8,-4.4+2-8.95)-| node[near start,above]{\scriptsize{$22$}}  (12.85,-3.1+1.35*1-9.8);
\draw[->,green!70!black](4.25,-4.4+2-9.4) --  
node[near end,above]{\scriptsize{$4+4-3$}} (8.25,-4.4+2-9.4) node[right]{\scriptsize{blocked at $(s_9,80)$}};
\draw[->,green!70!black](4.7,-4.4+2-9.85) --  
node[near end,above]{\scriptsize{$4+4$}} (8.25,-4.4+2-9.85) node[right]{\scriptsize{blocked at $(s_8,80)$}};
\draw[->,green!70!black](3.8,-4.4+1-8.95)-| (12.7,-4.2-8.65)  ;
\draw[->,green!70!black](4.25,-4.4+1-9.4) --  
node[near end,above]{\scriptsize{$4$}} (8.25,-4.4+1-9.4) node[right]{\scriptsize{blocked at $(s_7,70)$}};

\draw[->,green!70!black](4.7,-4.4+1-9.85)-| (12.1,-3.2-1.35-9.05)  ;

\foreach \i in {2,5,8}{
\draw[-,dotted, thick,magenta] (3.5+.15*\i,-3.2-11.1)-- ++ (0,4.35-.15*\i);
}
\draw[-,red, thick] (3.3,-14.5) node[left]{\scriptsize{$52\!$}}-- ++ (1.65,0) ;
\node[draw=none,align=center] at (4,-15.2){bounded $s_4$-chains\\(excluding guards)};

\foreach \i/\j in {0/54,1/63,2/72,3/81}{
 
		 \draw [magenta,fill=magenta!10!white] (3.8,-4.4+\i-8.95) 
		circle (.8mm);
		\draw [magenta,fill=magenta!10!white] (4.25,-4.4+\i-9.4) 
		circle (.8mm);
      \draw [magenta,fill=magenta!10!white] (4.7,-4.4+\i-9.85) 
		circle (.8mm) node {\textcolor{magenta}{\scriptsize{$ \quad \quad$\j}}};
}	

\foreach \i in {0,1}{
	\draw [green!70!black,fill=green!70!black] (3.8,-4.4+\i-8.95) 
		circle (.8mm);  
			\draw [green!70!black,fill=green!70!black] (4.7,-4.4+\i-9.85) 
		circle (.8mm);
}

\draw[-,red, thick] (11.3,-14.5) node[left]{\scriptsize{$80\!$}}-- ++ (1.8,0);
\node[draw=none,align=center] at (12,-15.2){relevant section\\of  $s_{10}$-chains };

\foreach \i in {0,3,6,9}{
\draw[-,dotted, thick,magenta] (11.5+.15*\i,-3.2-11.2)-- ++ (0,4.2-.15*\i);
}
\foreach \i/\j in {-1/80,0/90,1/100}{
		\draw [magenta,fill=magenta!10!white] (11.5,-3.2+1.35*\i-8.45) circle (.8mm);
      \fill [blue!60!white] (11.65,-3.2+1.35*\i-8.6) circle (.8mm);
      \fill [blue!60!white] (11.8,-3.2+1.35*\i-8.75) circle (.8mm);
      \draw [magenta,fill=magenta!10!white] (11.95,-3.2+1.35*\i-8.9) circle (.8mm);
		\fill [blue!60!white] (12.1,-3.2+1.35*\i-9.05) circle (.8mm);
      \fill [blue!60!white] (12.25,-3.2+1.35*\i-9.2) circle (.8mm);
      \draw [magenta,fill=magenta!10!white] (12.4,-3.2+1.35*\i-9.35) circle (.8mm);
		\fill [blue!60!white] (12.55,-3.2+1.35*\i-9.5) circle (.8mm);
      \fill [blue!60!white] (12.7,-3.2+1.35*\i-9.65) circle (.8mm);
      \draw [magenta,fill=magenta!10!white] (12.85,-3.2+1.35*\i-9.8) 
    circle (.8mm) node {\textcolor{magenta}{\scriptsize{$ \quad \quad$\j}}};
}
 
\end{tikzpicture}}
\caption{The set~$U_1$ is obtained from~$U_0$ in
  Figure~\ref{fig:U0}.   
}\label{fig:U1}
\end{subfigure}%
\quad
\begin{subfigure}[t]{0.35\textwidth}
  \centering
  \scalebox{.8}{\begin{tikzpicture}[->,>=stealth',shorten >=.7pt,node distance=1cm, initial text={},every text node part/.style={align=center},yscale=.85]

\draw[->,green!70!black](.4,7-9.75-12.2)-| node[near start,above]{\scriptsize{$42$}} (3.8,1.2*4-6.95-13.3);
\draw[->,green!70!black] (.4,6-9.75-12.2) -| node[near start,above]{\scriptsize{$42$}}  (4.7,1.2*4-7.85-13.3);
\draw[->,green!70!black] (.4,5-9.75-12.2) -- node[near end,above]{\scriptsize{$42$}}  (4.25,1.2*3-7.4-13.3);
\draw[->,green!70!black] (.4,4-9.75-12.2)-- (3.8,1.2*2-6.95-13.3);
\draw[->,green!70!black] (.4,3-9.75-12.2) -|  (4.7,1.2*2-7.85-13.3);
\draw[->,green!70!black] (.4,2-9.75-12.2) -|(4.25,1.2*1-7.4-13.3);

\draw[->,green!70!black](.4,1-9.75-12.2) --  
node[near end,above]{\scriptsize{$12$}} (2,1-9.75-12.2) node[right]{\scriptsize{$(s_3,30)$}};
\draw[->,green!70!black] (.4,-9.75-12.2) -|  (4.7,-7.85-13.3);

\draw[-,dotted, thick,magenta!90!gray] (.4,-9.9-12.2)-- ++ (0,7.2);
\draw[-,red, thick] (-.5,-10-12.2)node[left]{\scriptsize{$12\!$}}-- ++ (1.2,0) ;
\node[draw=none,align=center] at 
 (0.3,-10-12.8){the bounded\\$s_1$-chain};

\foreach \i/\j in {0/12,1/18,2/24,3/30,4/36,5/42,6/48,7/54} {
      \draw [magenta,fill=magenta!10!white] (.4,\i-9.75-12.2) 
		circle (.8mm) node {\textcolor{magenta}{\scriptsize{$ \quad \quad$\j}}};
}


\foreach \i in {2,5,8}{
\draw[-,dotted, thick,magenta] (3.5+.15*\i,-8.3-13.3)-- ++ (0,6.35-.15*\i);
}
\fill [blue!60!white] (3.5,-8-13.3) circle (.8mm);
\fill [blue!60!white] (3.65,-8.15-13.3) circle (.8mm);
\draw[-,red, thick] (3.3,-8.4-13.3) node[left]{\scriptsize{$52\!$}}-- ++ (1.65,0) ;
\node[draw=none,align=center] at 
  (3.8,-10-12.8){relevant section\\of $s_4$-chains};
%
\foreach \i/\j in {0/54,1/63,2/72,3/81,4/90}{

      \fill [blue!60!white] (3.5,1.2*\i-6.65-13.3) circle (.8mm);
      \fill [blue!60!white] (3.65,1.2*\i-6.8-13.3) circle (.8mm);
		 \draw [magenta,fill=magenta!10!white] (3.8,1.2*\i-6.95-13.3) 
		circle (.8mm);
			\fill [blue!60!white] (3.95,1.2*\i-7.1-13.3) circle (.8mm);
      \fill [blue!60!white] (4.1,1.2*\i-7.25-13.3) circle (.8mm);
		\draw [magenta,fill=magenta!10!white] (4.25,1.2*\i-7.4-13.3) 
		circle (.8mm);
	   \fill [blue!60!white] (4.4,1.2*\i-7.55-13.3) circle (.8mm);
      \fill [blue!60!white] (4.55,1.2*\i-7.7-13.3) circle (.8mm);
      \draw [magenta,fill=magenta!10!white] (4.7,1.2*\i-7.85-13.3) 
		circle (.8mm) node {\textcolor{magenta}{\scriptsize{$ \quad \quad$\j}}};
}	
\foreach \i in {0,1}{
	\draw [green!70!black,fill=green!70!black] (3.8,1.2*\i-6.95-13.3) 
		circle (.8mm);  
			\draw [green!70!black,fill=green!70!black] (4.7,1.2*\i-7.85-13.3) 
		circle (.8mm);
}
 \draw [yellow!90!black,fill=yellow!90!black] (.4,-9.75-12.2) 
		circle (.8mm);
		
\end{tikzpicture}}
  \caption{
  The set~$U_2$.
  }\label{fig:U2} 
\end{subfigure}
\caption{The sets $U_1$ and $U_2$ of the running example. The blue configurations are in~$U_0$; green
  ones are in~$U_{1} \setminus U_0$; yellow one is in~$U_{2} \setminus U_1$. The  pink configurations are
  in~$\Conf_{+}\setminus U_{1}$ and~$\Conf_{+}\setminus U_{2}$, respectively.  While computing~$U_1$, the green configurations~$(s_4,63)$ and
  $(s_4,69)$  take the primitive path $\pi=s_4,s_7,s_8,s_9,s_{10}$ to~$U_0$.
  In all other pink configurations in $s_4$-chains, although enabled, the path~$\pi$
  either hits a guard or ends in 
  $(s_{10},z)\in \Conf_{+}\setminus U_1$.}
\end{figure}
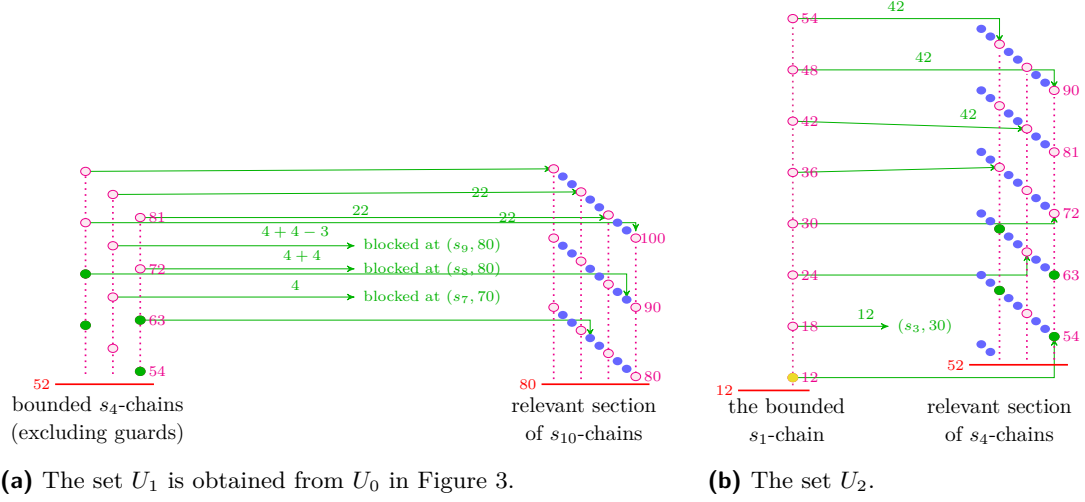

\begin{example}
  Figure~\ref{fig:U0}
  indicates the set~$U_0$ for the running example.
  Note that $U_0$ contains all trivial residue classes.  Observe  that
  $U'_0=\{(s_4,63), (s_4,69)\}$; see Figure~\ref{fig:U1}.  These two
  configurations belong to two distinct  chains. The downward closure
  of~$\{(s_4,63)\}$  in its chain is  $\{s_4\} \times \{54,63\}$, and the
  downward closure of~$\{(s_4,69)\}$  in its chain is  $\{s_4\} \times \{60,69\}$.
  We have that $U_{1} = U_0 \cup (\{s_4\} \times \{54,60,63,69\})$.
  The second iteration to compute~$U_2$
  only adds the configuration~$(s_1,12)$  to $U_1$; see
  Figure~\ref{fig:U2}. 
  The sequence stabilizes in this iteration.
\end{example}

\subsection{The Structure of $\Conf_{\infty}$}\label{subsec:struct}
 In this section we analyze the structure of $\Conf_{\infty}$,
  based on its inductive characterization.  
This analysis will be key in obtaining a polynomial-time algorithm
	to compute $\Conf_\infty$.
	
	
  The guiding intuition is that for all $n$ the set $U_n$ is
  \emph{almost upward closed} in each residue class~$R$.  By this we
  mean that if~$(q,z)$ is the least configuration in $R\cap U_n$, then
  all but polynomially many configurations of~$R$ above~$(q,z)$ are
  also in~$U_n$.  More specifically, we show that for any bounded
  chain $C$ in $R$ that lies above $(q,z)$, although the number of
  configurations in $C$ may be exponential in $\abs{Q}$, the size of
  $C\setminus U_n$ is bounded by a polynomial in $\abs{Q}$.  (Note
  here that the unique unbounded chain in $R$ is contained in $U_0$
  and hence is contained in $U_n$ for all $n\in\NN$.)  Using this
  observation, we provide a polynomial bound on the number of
  iterations until the inductive construction converges. Indeed, in
  every iteration, unless a fixed point has been reached, there must
  exist some bounded chain $C$ such that the size of $C\setminus U_n$
  strictly decreases. After showing that $C\setminus U_n$ is of
  polynomial size,
  we obtain a polynomial bound on the number of
  iterations until $U_n$ converges by Remark~\ref{lem:2q}.

We start by characterizing the paths between chains.
\begin{proposition}
  Let $(q,z),(q',z') \in \Conf_+$ and let $(q,z) \stackrel{\pi}{\rightarrow}
  (q',z')$ be a (not necessarily valid) run such that $\pi$ is a primitive path.
  Then there exists a run $(q,z) \stackrel{\pi'}{\rightarrow} (q',z'')$ of
  length at most $\abs{Q}^2+2$ such that
  \begin{enumerate}
    \item $\pmin(\pi')\ge \pmin(\pi)$,
    \item $z'' \geq z'$, and
    \item the $q'$-residue class of $(q',z'')$ is either trivial or identical to
      that of $(q',z')$.
  \end{enumerate}
  \label{prop:primitive}
\end{proposition}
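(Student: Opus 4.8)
The plan is to start from the run $(q,z)\runto{\pi}(q',z')$ and repeatedly delete cycles from $\pi$ until it is short, while never decreasing the minimal prefix weight nor the final counter value. First I would record the one structural consequence of primitivity that drives everything: since no proper infix of $\pi$ is a positive cycle, \emph{every} cycle occurring as a proper infix of $\pi$ has weight $\le 0$. I would also split $\pi$ at a minimal prefix, writing $\pi=\alpha\cdot\beta$ where $\alpha$ ends at the point realising $\pmin(\pi)$. Then every suffix of $\alpha$ has weight $\le 0$ and every prefix of $\beta$ has weight $\ge 0$; equivalently, along $\beta$ the counter never drops below its starting (minimal) value.

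The core operation is deleting a non-positive proper-infix cycle $q_i,\dots,q_j$ (with $q_i=q_j$). I would check that a single such deletion simultaneously cannot decrease $\pmin$ and cannot decrease the final value: the prefix weights before position $i$ are unchanged, those after position $j$ all increase by $-\weight(q_i,\dots,q_j)\ge 0$, and the total weight increases by the same non-negative amount. In particular deletions preserve non-negativity of the induced run, so conditions (1) and (2) are automatically maintained no matter which non-positive cycles we delete. Deleting such cycles greedily inside $\alpha$ and inside $\beta$ until each is a simple path bounds their lengths by $|Q|$, giving a run of length $\le 2|Q|+1$ that already satisfies (1) and (2).

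The real content is condition (3), the residue class of the endpoint, and this is where I expect the main difficulty. Writing $R\ge 0$ for the total weight deleted so far, we have $z''=z'+R$, so the endpoint lies in the class of $z'$ exactly when $R\equiv 0\pmod{W_{q'}}$. If this holds we are done; otherwise some deleted simple cycle $\delta$ has $\weight(\delta)\not\equiv 0\pmod{W_{q'}}$. The idea is then \emph{not} to delete all copies of $\delta$, but to retain a small number $j$ of them, which shifts the endpoint class by $j\cdot\weight(\delta)$. Crucially, by Remark~\ref{lem:2q} the state $q'$ has at most $|Q|$ non-trivial residue classes (one per induced guard along $\gamma_{q'}$); so among the $O(|Q|)$ candidate classes obtained by retaining $j=0,1,2,\dots$ copies of $\delta$, at least one is either trivial or equal to the class of $z'$. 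Retaining at most $O(|Q|)$ copies of a simple cycle of length $\le|Q|$ accounts for the $|Q|^2$ term in the length bound.

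The delicate points I would have to nail down, and which I regard as the crux, are all attached to this residue-adjustment step. First, the retained copies of $\delta$ must keep the induced run non-negative and must not lower $\pmin$ below $\pmin(\pi)$; the clean way to guarantee this is to retain copies \emph{in situ} (i.e.\ simply refrain from deleting them) rather than re-inserting them, since then their counter values are inherited from the original run and are bounded below by $z+\pmin(\pi)\ge 0$, whereas freely iterating a non-positive cycle would eventually force the counter negative. Second, I must argue that a bounded number of copies really does reach a good class: this uses that there are only $\le|Q|$ bad (non-trivial, non-matching) classes together with a pigeonhole over the shifts $j\cdot\weight(\delta)\bmod W_{q'}$, and may require choosing $\delta$ (or combining two deleted cycles) so that these shifts are sufficiently spread out modulo $W_{q'}$, the point being that $W_{q'}$ itself can be exponentially large so one cannot simply enumerate residues. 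Assembling the simple descent, the retained cycle copies, and the simple ascent then yields a run of length at most $|Q|^2+2$ satisfying all three conditions.
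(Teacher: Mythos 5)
Your handling of conditions (1) and (2) is sound and matches the paper's: primitivity forces every proper-infix cycle to be non-positive, so deleting such cycles can only raise prefix weights and the final counter value. The genuine gap is exactly where you place the crux, condition (3), and you do not close it. Your plan is to first strip the path down to a short one and then repair the residue class by retaining $j$ copies of a single deleted simple cycle $\delta$, arguing by pigeonhole that among the classes $z''+j\cdot\weight(\delta) \bmod W_{q'}$ for $j=0,1,\dots,O(\abs{Q})$ one must be trivial or equal to the class of $z'$. This claim is false as stated: if the order of $\weight(\delta)$ in $\mathbb{Z}/W_{q'}\mathbb{Z}$ is small (say $2$), the orbit of shifts contains only a couple of classes and nothing guarantees any of them is good; and in general the total deleted weight decomposes over \emph{several} distinct simple cycles with various multiplicities, so there need not be any single $\delta$ whose multiples generate the required correction. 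You flag this yourself (``may require choosing $\delta$ or combining two deleted cycles'') but offer no mechanism, and that mechanism is the entire content of the proposition.

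The paper avoids the problem by never separating ``shorten'' from ``fix the residue'': it removes one cycle at a time, choosing \emph{which} cycle to remove in a residue-aware way. Concretely, if the current path is too long, pigeonhole gives $\abs{Q}+1$ proper prefixes $\pi_1,\dots,\pi_{\abs{Q}+1}$ ending at the same state; either two of the weights $\weight(\pi_i)$ agree modulo $W_{q'}$, in which case deleting the cycle between them changes the endpoint by a multiple of $W_{q'}$ and preserves the residue class exactly, or all $\abs{Q}+1$ weights are distinct modulo $W_{q'}$, in which case one of the deletions $\pi_1\cdot\pi'_i$ lands in a trivial class (there being at most $\abs{Q}$ non-trivial ones). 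Iterating maintains the invariant ``trivial or same class as $z'$'', and primitivity again makes every deleted cycle non-positive, giving (1) and (2) for free. The dichotomy ``two prefix weights congruent mod $W_{q'}$ versus all distinct'' is the missing idea in your argument; your single-cycle retention scheme would only recover it in the special case $\pi=\alpha\cdot\delta^{k}\cdot\beta$, by splitting on whether $\mathrm{ord}(\weight(\delta)\bmod W_{q'})$ exceeds $\abs{Q}$, and does not extend to the general cycle structure of a primitive path.
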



 Given a $q$-residue class $R$, in general $U_n$ is not an upward
  closed subset of $R$.  The following definitions are intended to
  measure the defect of $U_n$ in this regard.

We say that a bounded chain $C$ that is contained in a residue
  class $R$ is \emph{$n$-active} if there exists a configuration in
  $U_n \cap R$ that lies below some configuration in $C$.
 Let $C$ be an $n$-active chain.  Recall that
  $U_n$ is downward closed in $C$ and hence $C\setminus U_n$ is upward
  closed in $C$.  Suppose that $C\setminus U_n$ is non-empty, write
  $m_1 := \min \{ x : (q,x) \in C\setminus U_n \}$ and
  $m_2:=\max \{ x: (q,x) \in C\setminus U_n\}$, and define\footnote{We omit $q$ from the definition of $\delta_n(C)$ for brevity.}
 \[
   \delta_n(C):= \{ (q,x) \in \Conf_+ : m_1\leq x\leq m_2 \text{ and } (q,x)
\not\in U_n\}.
 \]
Thus $\delta_n(C)$ contains all configurations in
  $C\setminus U_n$, as well as all configurations ``between''
  elements of $C\setminus U_n$, apart from those that are themselves
  in $U_n$.  If $C\setminus U_n=\emptyset$ then we define
  $\delta_n(C):=\emptyset$.  Finally for a residue class $R$ we write
\begin{gather} 
  \delta_n(R) :=  \bigcup \left\{ \delta_n(C) : \text{$C\subseteq R$ an
  $n$-active chain} \right\} \, .
\label{def:delta}
\end{gather}
For $(q,x_{\min})$ 
the least element in
$R\cap U_n$ we have that
$\abs{\{ (q,x) \in R\setminus U_n : x_{\min}\leq x\}} \leq
\abs{\delta_n(R)}$.

\begin{example}
  In Figure~\ref{fig:U1} consider the chain~$C:=\{s_4\}\times \{54,63,72,81\}$, which is $1$-active as $(s_4,54)\in U_1$.
  Since $C\setminus U_1=\{s_4\}\times \{72,81\}$  we have that $\delta_1(C)=\{s_4\}\times \{72,75,78,81\}$.
\end{example}

\begin{lemma}
\label{lem: delta(C) not much bigger}
  For all $n \in \NN$ and every chain $C$ we have that
  $|\delta_n(C)|\le |Q|\cdot |C\setminus U_n|$.
\end{lemma}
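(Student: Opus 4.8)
The plan is to reduce the inequality to an elementary counting argument over the residue classes of~$q$, using two facts: that within any fixed residue class the interval~$[m_1,m_2]$ contains only about~$\abs{C\setminus U_n}$ configurations, and that only a bounded number of residue classes can contribute to~$\delta_n(C)$ at all.

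First I would dispose of the trivial cases. If $C\setminus U_n=\emptyset$ then $\delta_n(C)=\emptyset$ by definition and there is nothing to prove; in particular this covers every unbounded chain, since such a chain lies in $U_0\subseteq U_n$. So assume $C$ is a bounded $q$-chain with $C\setminus U_n\neq\emptyset$. I would record that $C$ is a \emph{contiguous} arithmetic progression of common difference $W_q$ inside its residue class: if $(q,a),(q,b)\in C$ with $a<b$, then by definition of a chain they are joined by a valid run iterating $\gamma_q$, and every intermediate configuration $(q,a+jW_q)$ visited by that run is joined to both $(q,a)$ and $(q,b)$ by valid sub-runs, hence lies in $C$ by maximality. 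Since $U_n$ is downward closed in $C$, the set $C\setminus U_n$ is an upward-closed segment of this progression, so $C\setminus U_n=\{m_1,m_1+W_q,\dots,m_2\}$ and therefore $m_2-m_1=(\abs{C\setminus U_n}-1)\,W_q$.

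Next comes the counting heart of the argument. Since $\delta_n(C)\subseteq\{(q,x)\in\Conf_+ : m_1\le x\le m_2,\ (q,x)\notin U_n\}$, I would partition this set by the residue class of $x$ modulo $W_q$. For each residue class $R'$ of $q$, the values $x\in[m_1,m_2]$ lying in $R'$ form an arithmetic progression of common difference $W_q$ inside an interval of length $(\abs{C\setminus U_n}-1)\,W_q$, so there are at most $\abs{C\setminus U_n}$ of them; hence $R'$ contributes at most $\abs{C\setminus U_n}$ configurations to $\delta_n(C)$. The decisive observation is that only non-trivial residue classes can contribute: a trivial residue class consists of a single unbounded chain and is thus contained in $U_0\subseteq U_n$, contributing nothing. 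I would then bound the number of non-trivial residue classes by $\abs{Q}$. A residue class is non-trivial precisely when it contains an induced guard, i.e. a value from which the step of $\gamma_q$ is blocked; here I use that configurations in $\Conf_+$ never fall below~$0$ while iterating $\gamma_q$ (because $z+\pmin(\gamma_q)\ge 0$), so the only way a step of $\gamma_q$ can be blocked is by hitting a disequality guard along the cycle. Since $\gamma_q$ is a simple cycle, it has at most $\abs{Q}$ distinct states, each carrying at most one guard, so there are at most $\abs{Q}$ induced guards in total, and hence at most $\abs{Q}$ non-trivial residue classes. Combining the two bounds yields $\abs{\delta_n(C)}\le \abs{Q}\cdot\abs{C\setminus U_n}$.

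The step I expect to be the main obstacle is justifying ``at most $\abs{Q}$ non-trivial residue classes'': I must argue that every bounded chain is capped above by an induced guard, that a residue class is trivial exactly when it is free of such guards, and that along the simple cycle $\gamma_q$ the number of guards (hence of induced guards, hence of affected residue classes) is at most $\abs{Q}$. Everything else is either definitional---the arithmetic-progression shape of $C$ and of $C\setminus U_n$, together with the downward-closedness of $U_n$ in chains---or the one-line interval count giving at most $\abs{C\setminus U_n}$ configurations per residue class.
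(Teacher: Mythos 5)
Your proof is correct and follows essentially the same route as the paper's: both arguments reduce to the two facts that only non-trivial $q$-residue classes (of which there are at most $\abs{Q}$, one per guard on the simple cycle $\gamma_q$) can meet $\delta_n(C)$, and that the interval $[m_1,m_2]$ spans $\abs{C\setminus U_n}$ steps of size $W_q$. The only difference is bookkeeping: you sum over residue classes (at most $\abs{Q}$ classes, each contributing at most $\abs{C\setminus U_n}$ values), whereas the paper sums over the windows between consecutive elements of $C\setminus U_n$ (at most $\abs{C\setminus U_n}$ windows, each contributing at most $\abs{Q}$ values) --- the same double count, transposed.
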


We now come to the central technical part of the paper, controlling the growth 
of $\delta_n(R)$ as a function of~$n$:

\begin{lemma}\label{lem:claim}
  There exists a polynomial $\poly_2$ such that for each residue class
  $R$ and all $n \in \mathbb{N}$ we have
  \(
    \abs{\delta_{n+1}(R)}  \leq  \max\{\abs{\delta_n( R')} : R' \text{ a residue
    class} \}+ \poly_2(\abs{Q})
  \)
  if $R$ contains a chain that is $(n+1)$-active but not $n$-active.
\end{lemma}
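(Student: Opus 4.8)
The plan is to transport the whole of $\delta_{n+1}(R)$ into the defect $\delta_n(R')$ of a \emph{single} target residue class $R'$, using the primitive path that witnesses the new activation, and to charge all discrepancies to an additive polynomial term; taking $R'$ to realise the maximum then yields the inequality.

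First I would extract the activation path. Since $R$ contains a chain that is $(n+1)$-active but not $n$-active, the configurations of $(U_{n+1}\setminus U_n)\cap R$ causing the activation lie, by downward-closure in chains, below some configuration of $U_n'$; by Remark~\ref{rmk: Un primitive to Un+1} such a configuration reaches $U_n$ along a primitive path, and by Proposition~\ref{prop:primitive} I may take this path $\pi$ to have length at most $\abs{Q}^2+2$ and to terminate in a $q'$-residue class $R'$ that is either trivial or fixed.

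Next I would set up the correspondence. The active range of $R$ (the counter values above the least configuration of $U_{n+1}\cap R$) is carried by $\pi$ through the affine shift $(q,z)\mapsto(q',z+\weight(\pi))$. The point is that a configuration of this range lies outside $U_{n+1}$ essentially only when its image lies outside $U_n$; such an image then lies in $R'\setminus U_n$ and, being in the active range of $R'$, belongs to $\delta_n(R')$. This gives an injection of $\delta_{n+1}(R)$ into $\delta_n(R')$ away from a small exceptional set. Here the monotonicity $U_n\subseteq U_{n+1}$ is convenient: for chains already $n$-active one has $\delta_{n+1}(C)\subseteq\delta_n(C)$ directly, so only the newly activated band---lying below all previously active chains, since being active is upward closed among chains---needs the transport argument. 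Converting between chain-element counts and configuration counts by Lemma~\ref{lem: delta(C) not much bigger}, and bounding the number of chains and guards in play by Remark~\ref{lem:2q}, I would obtain $\abs{\delta_{n+1}(R)}\le\abs{\delta_n(R')}+\poly_2(\abs{Q})$.

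The hard part will be the residue-class mismatch, which is precisely what dictates a single maximum rather than a sum. The shift by $\weight(\pi)$ sends an arithmetic progression of difference $W_q$ across several $q'$-residue classes of modulus $W_{q'}$, so a priori the defect injects into a union of target defects. Proposition~\ref{prop:primitive}(3) is what rescues the single-class form: each image is rerouted either into a trivial class, which sits in $U_0\subseteq U_n$ and contributes no defect, or into one fixed class $R'$. The remaining delicate points, all feeding only the additive $\poly_2(\abs{Q})$ term, are that the at most $\abs{Q}$ guards met along the short path $\pi$ each spoil only polynomially many configurations (as in Remark~\ref{lem:2q}); that the boundary configurations where $\pi$ enters or exits the active range of $R'$ must be handled separately; and that, since membership in $U_{n+1}$ is governed by the minimal-distance layer $U_n'$ rather than by reachability of $U_n$ alone, one must verify that configurations whose image lies well inside $U_n$ are genuinely swept in by downward-closure.
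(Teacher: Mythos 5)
There is a genuine gap at exactly the point you flag as "the hard part". Proposition~\ref{prop:primitive}(3) is a statement about the \emph{single} run $(q,z_0)\stackrel{\pi'}{\rightarrow}(q',z'')$ from the one configuration $(q,z_0)\in U'_{n+1}$ used to construct $\pi'$: it says that this one endpoint lies in a trivial class or in the class of $(q',z')$. It does not say that every other configuration of the chain is "rerouted" into a trivial class or into $R'$ when it follows the same fixed path $\pi'$. Once $\pi'$ is fixed, the images of $z_0, z_0+W_q, z_0+2W_q,\dots$ under the shift $x\mapsto x+\weight(\pi')$ are forced by arithmetic and sweep cyclically through $W_{q'}/\gcd(W_q,W_{q'})$ distinct $q'$-residue classes; you cannot invoke Proposition~\ref{prop:primitive} per translate to collapse them. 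The paper closes this hole with a dichotomy on the order $d=W_{q'}/\gcd(W_q,W_{q'})$ of $W_q$ modulo $W_{q'}$. If $d>\abs{Q}$, then any $\abs{Q}+1$ consecutive non-blocked elements of $C\setminus U_{n+1}$ have images in pairwise distinct $q'$-classes, so one lands in a trivial class, hence in $U_n$ --- contradicting membership in $C\setminus U_{n+1}$; this bounds $\abs{C\setminus U_{n+1}}$ by $(\abs{Q}^2+2)(\abs{Q}+1)$ outright, with no $\max$ term needed. If $d\le\abs{Q}$, then every window of $\abs{Q}+1$ consecutive non-blocked elements contains one whose image lies in $R'$, which is what makes the partial injection $\Phi$ into $\delta_n(R')$ densely defined. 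Without this case split your transport argument does not go through.

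A secondary problem is your hybrid accounting: handling already-$n$-active chains by $\delta_{n+1}(C)\subseteq\delta_n(C)$ and only transporting the newly activated band yields $\abs{\delta_{n+1}(R)}\le\abs{\delta_n(R)}+\abs{\delta_n(R')}+\poly(\abs{Q})$, i.e.\ twice the maximum, not the single maximum claimed. The single-$\max$ form is not cosmetic: the induction in Proposition~\ref{prop:holes} bounds $\abs{\delta_{n+1}(R)}$ by $\alpha_{n+1}\cdot\poly_2(\abs{Q})$ using $\alpha_{n+1}\ge\alpha_n+1$, and a factor of two would break it. The paper instead defines $\Phi$ on all of $\delta_{n+1}(R)$ (every $(n+1)$-active chain, not just the new ones) and shows it is defined off a polynomially small exceptional set, landing everything in the one class $\delta_n(R')$.
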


Before proceeding to prove Lemma~\ref{lem:claim}, we demonstrate the underlying intuition.
Consider a configuration $(q,z) \in R\cap U'_{n+1}$ that has a primitive path $\pi$
to a configuration~$(q',z') \in U_{n}$.  
To prove Lemma~\ref{lem:claim}, we argue that $\pi$ lifts to a valid run
from a ``dense'' subset of configurations in $\{(q,z'') \in R : z''\geq z\}$.
There are two main cases in this argument based on whether one of the larger
configurations in the chain induces a valid run ending in a trivial residue
class.

\begin{example}
  The first case occurs in obtaining~$U_1$ from~$U_0$ in the running example;
  see Figure~\ref{fig:U1}. Consider the chain~$C:=\{s_4\}\times \{54,63,72,81\}$.  The
  primitive path~$s_4, s_7, s_8, s_9, s_{10}$ from the largest
  configuration~$(s_4,81)$ in~$C$ leads to a non-trivial $s_{10}$-residue class
  (out of~$U_0$).  However, one among the $n$-next largest configurations
  in~$C$, for $n=\abs{\blocked(s_4, s_7, s_8, s_9, s_{10})} \cdot\abs{Q}$, lifts
  to a valid run to a trivial $s_{10}$-residue class. In the example, this is
  the case for~$(s_4,63)$.
  The second case occurs in obtaining~$U_2$ from~$U_1$ in the running example;
  see Figure~\ref{fig:U2}. Consider the
  chain~$C':=\{s_1\}\times \{12,18,24,\cdots,54\}$.  The primitive path~$s_1, s_3, s_4$,
  from none of the configurations in this chain, ends in a trivial $s_4$-residue
  class.  However, we provide a  subtle argument to  bound $\abs{C'\setminus
  U_2}$  with $\abs{\delta_1(C)}+\poly_2(\abs{Q})$.
\end{example}

\begin{proof}[Proof of Lemma~\ref{lem:claim}]
Pick the minimal element $(q,z_0) \in R\cap U'_{n+1}$.
Moreover, let $(q',z')\in U_n$ and $(q,z_0) \stackrel{\pi}{\rightarrow} (q',z')$
be such that $\pi$ is a shortest run from $(q,z_0)$ to $U_n$.
By Remark~\ref{rmk: Un primitive to Un+1}, $\pi$ is a primitive path.
 
By Proposition~\ref{prop:primitive} there is a run $(q,z_0)
  \stackrel{\pi'}{\rightarrow} (q',z'')$, for some $z'' \geq z'$, such that
  $\pi'$ has length at most $\abs{Q}^2 + 2$, and the residue class $R'$ of
  $(q',z'')$ is either {\em trivial} or the same as the residue class of
  $(q',z')$.
%

  Note that we do not claim that $(q',z'') \in U_n$, nor that $\pi'$ lifts to
  a valid run. In what follows we will argue that if there are more than some
  polynomial number of configurations above $(q,z_0)$ in $C \setminus U'_{n+1}$,
  where $C$ is an $(n+1)$-active chain of $R$, then $\pi'$ does lift to
  a valid run from one of them. Moreover, the run leads to some configuration
  in the same residue class as $(q',z')$ or to a trivial residue class.
  Observe that, intuitively, this means we ``pump'' $\gamma_q$ before taking
  $\pi'$ so if we wanted to reach the same residue class as $(q',z')$ we
  would need some nonnegative integer $c$ such that
  \[
    z_0 + W_q \cdot c + \weight(\pi') \equiv z_0 + \weight(\pi') \pmod{W_{q'}}\, .
  \]
  Based on this intuition, we now identify two cases according to the order of
  $W_q$ in the group $\mathbb{Z}/\mathbb{Z}W_{q'}$ of integers modulo
  $W_{q'}$, which is $\frac{W_{q'}}{\gcd(W_q,W_{q'})}$.  Recall that this
  quantity is the smallest integer $c \geq 1$ such that $W_{q} \cdot c
  \equiv 0 \pmod{W_{q'}}$.

\item \subparagraph{Case~(i):} $\frac{W_{q'}}{\gcd(W_q,W_{q'})} > \abs{Q}$.  We
  first show that
  $\abs{C\setminus U_{n+1}} \leq (\abs{Q}^2 + 2) (\abs{Q}+1)$
  for every
  $(n+1)$-active chain $C$ in $R$.  

  Let $C$ be an $(n+1)$-active chain of $R$ and suppose for a
  contradiction that 
  $\abs{C\setminus U_{n+1}} > (\abs{Q}^2 + 2) (\abs{Q}+1)$.
  Since $C$ is $(n+1)$-active, 
  for every configuration $(q,z) \in C \setminus U_{n+1}$ we have $z \geq z_0$.
  Further, since
  $\pmin(\pi')+z_0 \geq 0$, $\pi'$ can only be
  blocked on 
 a configuration due to a violation of a disequality guard. Since the
  length of $\pi'$ is at most $|Q|^2 + 2$, it follows that at most
  $\abs{Q}^2 + 2$ elements of $C\setminus U_{n+1}$ lie in
  $\{q\} \times \mathrm{blocked}(\pi')$.

  Recall that $C\setminus U_{n+1}$ is upward closed in $C$, so 
  by the assumption that 
  $\abs{C\setminus U_{n+1}} > (\abs{Q}^2 + 2) (\abs{Q}+1)$, there
  exists a set $S:=\{ (q,z_1 + iW_q) : 0 \leq i \leq \abs{Q}\}$ of
  $\abs{Q}+1$ ``consecutive'' elements of $C\setminus U_{n+1}$, for some
  $z_1$, such that no element of $S$ lies in $\{q\} \times 
  \mathrm{blocked}(\pi')$.  Then $\pi'$ lifts to a valid run from each
  element of $S$.  Moreover, since the order of $W_q$ in
  $\mathbb{Z}/\mathbb{Z}W_{q'}$ is assumed to be greater than $\abs{Q}$,
  the images of the elements of $S$, after following $\pi'$, lie in pairwise
  distinct $q'$-residue classes.  But the number of non-trivial $q'$-residue
  classes is at most $\abs{Q}$ and hence some configuration in $S$ has a run
  over $\pi'$ to a trivial $q'$-residue class and hence to $U_n$.  But then such
  a configuration lies in $U_{n+1}$, which is a contradiction. 

  We conclude that
  $\abs{C\setminus U_{n+1}} \leq (\abs{Q}^2 + 2) (\abs{Q}+1)$ for every
  $(n+1)$-active chain $C$ in $R$.  
  But then
$|\delta_{n+1}(C)| \leq \abs{Q} (\abs{Q}^2 + 2) (\abs{Q}+1)$ by Lemma~\ref{lem: delta(C) not much bigger}.
  Finally, since $R$ comprises at most
  $2\abs{Q}$ bounded chains by Remark~\ref{lem:2q}, we have that
$\abs{\delta_{n+1}(R)} \leq 2\abs{Q}^2 (\abs{Q}^2 + 2) (\abs{Q}+1)$.

  \item \subparagraph{Case~(ii):} $\frac{W_{q'}}{\gcd(W_q,W_{q'})} \leq \abs{Q}$.
  For the residue classes $R$ and $R'$ as above, define an injective
  partial mapping $\Phi: \delta_{n+1}(R) \rightarrow \delta_n(R')$ by
  $\Phi(q,x) =(q',x')$ if and only if $x'=x+\mathrm{weight}(\pi')$ and 
  $(q',x') \in \delta_n(R')$.  We will prove that $\Phi$ is defined on
  all but $\poly_3(\abs{Q})$ many configurations in
  $\delta_{n+1}(R)$, for some polynomial $\poly_3$, thereby showing that
  $|\delta_{n+1}(R)| \leq |\delta_n(R')| + \poly_3(\abs{Q})$.  
  To this end,  
  it suffices to show that $\Phi$ is defined on all but
  $\poly_4(\abs{Q})$ many configurations in $\delta_{n+1}(C)$ for
  every $(n+1)$-active chain $C$ in $R$, for some polynomial $\poly_4$.

  Let $C$ be an $(n+1)$-active chain in $R$ and let $C_1,\ldots,C_s$ be
  a list, given in increasing order, of the chains in $R'$ that are
  mapped into by $\Phi$ from some configuration in $\delta_{n+1}(C)$.
  Then $C_1,\ldots,C_s$ are all $n$-active (as they are above $(q',z')\in U_n$).  For $i\in\{1,\ldots,s\}$,
  write $(q,x^{(i)}_{\min})$ for the minimum configuration in
  $\delta_{n+1}(C)$ that is mapped by $\Phi$ to $C_i$ and write
  $(q,x^{(i)}_{\max})$ for the maximum configuration in
  $\delta_{n+1}(C)$ that is mapped to $C_i$.  Then for each
  $i=1,\ldots,s$, every configuration $(q,x) \in \delta_{n+1}(C)$ such
  that $x^{(i)}_{\min} \leq x \leq x^{(i)}_{\max}$
  and
  $x \not\in \times \mathrm{blocked}(\pi')$
  is mapped by $\Phi$ to
  $\delta_n(R')$.  Thus, writing $(q,x_{\max})$ and $(q,x_{\min})$
  respectively for maximum and minimum configurations in
  $\delta_{n+1}(C)$, we have that
  $\Phi$ is defined on all non-blocked
  elements of $\delta_{n+1}(C)$ lying outside the set below.
  %
	\begin{equation}\left\{ (q,x) \in \delta_{n+1}(C) \:\middle|\:
	x \in  \left(x^{(s)}_{\max},x_{\max}\right]  \cup \left[x_{\min},x^{(1)}_{\min} \right) 
	\cup \bigcup_{i=1}^{s-1} \left(x^{(i)}_{\max} ,x^{(i+1)}_{\min}\right) \right\} 
	\label{eq:exception}
	\end{equation}
  Since $\mathrm{blocked}(\pi')$ contains at most $|Q|^2 + 2$ elements,
  it remains to prove that the set~\eqref{eq:exception} has polynomial
  cardinality.
  %
  %
  We claim its size is at most
  $(2\abs{Q}+1)\cdot \poly_5(\abs{Q})$, for some polynomial $\poly_5$.  For this it will suffice to show that
  any sub-interval $I$ of $\delta_{n+1}(C)$ of the form $\{ (q,x) \in
  \delta_{n+1}(C) : a \leq x \leq b\}$, where $a,b \geq x_{\mathrm{min}}$, and
  such that it does not meet the domain of~$\Phi$, has cardinality at most
  $\poly_5(\abs{Q})$. (Indeed, note that~\eqref{eq:exception} is a union of
  at most $2\abs{Q}+1$ such intervals since there are at most $2|Q|$ chains in
  $R$ by Remark~\ref{lem:2q}.)
  
  Let $\poly_6(x) := (x^2 +2)(x+1)+1$.
  Since $\blocked(\pi')$ has cardinality at most
  $|Q|^2 + 2$, if we take
$\poly_6(|Q|)$ consecutive elements of
  $C \setminus U_{n+1}$ then there are at least $\abs{Q}+1$ consecutive
  elements that lie outside $\{q\} \times \blocked(\pi')$ and at least one of these
  elements---say $(q,x)$---has a valid run over $\pi'$ to the residue
  class $R'$ by the assumption that
  $\frac{W_{q'}}{\gcd(W_q,W_{q'})} \leq \abs{Q}$.  Since
  $(q,x)\not\in U_{n+1}$ we have that
  $(q',x+\mathrm{weight}(\pi'))\not\in U_n$ and hence
  $(q,x)$ is in the domain of~$\Phi$. We conclude that any sequence of at least
  $\poly_6(|Q|)$ consecutive elements of $C\setminus U_{n+1}$ meets
  the domain of $\Phi$. Hence any sub-interval 
  $I$, as defined above, contains at most
 $\poly_6(|Q|)$ elements of $C \setminus U_{n+1}$ and, by
    Lemma~\ref{lem: delta(C) not much bigger},
  contains at most $\abs{Q} \cdot \poly_6(\abs{Q})$ elements in total. 
\end{proof}

Proposition~\ref{prop:holes} follows from Lemma~\ref{lem:claim} by induction, as follows.

\begin{proposition}
  There exists a polynomial $\poly_1$ such that for each residue class
  $R$ and all $n\in\mathbb{N}$ we have $|\delta_n(R)| \leq
  \poly_1(\abs{Q})$.
\label{prop:holes}
\end{proposition}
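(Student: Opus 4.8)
The plan is to prove the stronger uniform bound by induction on $n$, combining Lemma~\ref{lem:claim} with two monotonicity observations and a counting argument that limits how often Lemma~\ref{lem:claim} can actually cause $\delta$ to grow. Write $M_n := \max\{\abs{\delta_n(R)} : R \text{ a residue class}\}$; it then suffices to bound $M_n$ by a fixed polynomial in $\abs{Q}$, independent of $n$.

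First I would record the base case and a shrinking property. At $n=0$ the set $U_0$ consists of the unbounded chains, each of which sits above every bounded chain of its residue class; hence no bounded chain is $0$-active and $\delta_0(R)=\emptyset$ for every $R$, so $M_0=0$. For the inductive step, the key observation is that if a residue class $R$ acquires no new active chain at step $n+1$ -- that is, $R$ contains no chain that is $(n+1)$-active but not $n$-active -- then $\delta_{n+1}(R) \subseteq \delta_n(R)$. This follows because activity is monotone in $n$ (as $U_n \subseteq U_{n+1}$), so the $(n+1)$-active chains of $R$ are exactly its $n$-active chains; and for each such chain $C$, the inclusion $C\setminus U_{n+1} \subseteq C \setminus U_n$ shrinks the interval $[m_1,m_2]$ defining $\delta(C)$ and removes configurations now absorbed into $U_{n+1}$, giving $\delta_{n+1}(C)\subseteq \delta_n(C)$.

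Next I would combine these facts into a one-step recurrence for $M_n$. If no residue class acquires a new active chain at step $n+1$, then $\delta_{n+1}(R)\subseteq \delta_n(R)$ for all $R$ and hence $M_{n+1}\le M_n$. Otherwise, every dormant residue class still satisfies $\abs{\delta_{n+1}(R)}\le M_n$, while every residue class that does acquire a new active chain satisfies $\abs{\delta_{n+1}(R)} \le M_n + \poly_2(\abs{Q})$ by Lemma~\ref{lem:claim}; taking the maximum yields $M_{n+1}\le M_n+\poly_2(\abs{Q})$. In short, $M$ can increase (by at most $\poly_2(\abs{Q})$) only at those steps at which some bounded chain becomes active for the first time.

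Finally I would bound the number of such steps. Activity is defined only for bounded chains, and a chain becomes active at most once since $U_n\subseteq U_{n+1}$. Bounded chains occur only in non-trivial residue classes, and there are at most $\abs{Q}^2$ of these: a non-trivial $q$-residue class must contain an induced guard, the simple cycle $\gamma_q$ induces at most $\abs{Q}$ guards, and there are at most $\abs{Q}$ states $q\in Q_+$. By Remark~\ref{lem:2q} each such residue class has at most $2\abs{Q}$ bounded chains, so the total number of bounded chains -- and hence the total number of activation events, and a fortiori the number of steps at which $M$ can increase -- is at most $2\abs{Q}^3$. Summing the recurrence gives $M_n \le M_0 + 2\abs{Q}^3\cdot \poly_2(\abs{Q}) = 2\abs{Q}^3\cdot\poly_2(\abs{Q})$ for all $n$, and taking $\poly_1(x):=2x^3\,\poly_2(x)$ proves the proposition. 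The one point demanding care is the shrinking property $\delta_{n+1}(R)\subseteq\delta_n(R)$ in the dormant case, together with the observation -- crucial for obtaining a polynomial rather than merely finite bound -- that although there may be exponentially many residue classes overall, only polynomially many are non-trivial and can ever contribute an activation event.
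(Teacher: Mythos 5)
Your proof is correct and follows essentially the same route as the paper's: induction on $n$, the shrinking property $\delta_{n+1}(R)\subseteq\delta_n(R)$ for residue classes with no newly activated chain, Lemma~\ref{lem:claim} for the others, and a polynomial count of bounded chains to limit the number of increases. The only difference is bookkeeping — you track the global maximum $M_n$ and count activation \emph{steps}, whereas the paper carries the invariant $\abs{\delta_n(R)}\le \alpha_n\cdot\poly_2(\abs{Q})$ with $\alpha_n$ the number of $n$-active chains — and your explicit accounting over all states $q\in Q_+$ (giving $2\abs{Q}^3$ rather than the paper's $2\abs{Q}^2$) is if anything slightly more careful.
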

\begin{proof}
  Let $\alpha_n$ be the number of chains in $\Conf_+$ that are $n$-active.
  Since $n$-active chains are by definition bounded, we have that $\alpha_n \leq
  2\abs{Q}^2$ for all $n\in \mathbb{N}$ (see Remark~\ref{lem:2q}).  We argue by
  induction on $n$ that $\abs{\delta_n(R)} \leq \alpha_n \cdot
  \poly_2(\abs{Q})$ for all $n\in \mathbb{N}$ and all residue classes $R$.
  We conclude that $\abs{\delta_n(R)} \leq 2\abs{Q}^2 \cdot
  \poly_2(\abs{Q})$.

The base case is trivial as there are no $0$-active chains and 
  $\delta_0(R)$ is empty for all residue classes.
  The induction step has two cases.  First, suppose that
  $\alpha_{n+1}=\alpha_n$, i.e., all chains in $\Conf_+$ that are $(n+1)$-active
  were already $n$-active.  Since $U_n \subseteq U_{n+1}$, we have that
  $\delta_{n+1}(C) \subseteq \delta_n(C)$ for all chains $C$ in $R$.  We
  conclude that $\delta_{n+1}(R) \subseteq \delta_n(R)$ and so 
  $\abs{\delta_{n+1}(R)} \leq \abs{\delta_n(R)}$. Since $\abs{\delta_n(R)}
  \leq \alpha_n \cdot \poly_2(\abs{Q})$ by induction hypothesis, and $\alpha_n =
  \alpha_{n+1}$ we get that
  $\abs{\delta_{n+1}(R)} \leq\alpha_{n+1} \cdot \poly_2(\abs{Q})$.

  The second case 
  is that $\alpha_{n+1}>\alpha_n$.  Then by Lemma~\ref{lem:claim} we have
  $\abs{\delta_{n+1}(R)} \leq \max\{\abs{\delta_n( R')} : R' \text{ a
  residue class} \}+ \poly_2(\abs{Q})$. Since the right-hand side of the latter
  is at most $\leq \alpha_n \cdot \poly_2(\abs{Q}) + \poly_2(\abs{Q})$, by
  induction hypothesis, and $\alpha_{n+1}>\alpha_n$
  we get that $\abs{\delta_{n+1}(R)} \leq
  \alpha_{n+1} \cdot \poly_2(\abs{Q})$.
\end{proof}

Recall that $(U_n)_{n\in \NN}$ is a monotone sequence. Furthermore, observe that by the proof of Lemma~\ref{lem:claim} the sequence $|\delta_n(R)|$ either strictly decreases, or possible increases if $R$ contains a chain that is $(n+1)$-active but not $n$-active. Since the latter can only take place $|Q|$ times, then $|\delta_n(R)|$ can take a polynomial number of distinct values before converging. Thus, as a consequence of Proposition~\ref{prop:holes} we have:

\begin{corollary}
 The sequence $(U_n)_{n\in\mathbb{N}}$ stabilizes in at most
$\poly_1(\abs{Q})$ steps.
\label{corl:rounds}
\end{corollary}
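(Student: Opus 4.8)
The plan is to track a single non-negative integer potential $\Psi(n)$ that is polynomially bounded in $\abs{Q}$, strictly decreases at every step where $U_n$ changes except for a polynomially bounded set of ``activation'' steps, and then to run a standard amortised argument. Since the map $U_n \mapsto U_{n+1}$ is deterministic, the first index $n$ with $U_{n+1}=U_n$ is exactly the point of stabilisation, so it suffices to bound the number of indices at which $U_n$ actually grows.

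Define
\[
  \Psi(n) := \sum_{C \text{ an } n\text{-active chain}} \abs{C \setminus U_n}\,.
\]
I would first record two bounds on this quantity. On the one hand, by the proof of Proposition~\ref{prop:holes} the number of $n$-active chains is $\alpha_n \le 2\abs{Q}^2$ for every $n$ (Remark~\ref{lem:2q}), and active chains are bounded. On the other hand, for an $n$-active chain $C$ contained in a residue class $R$ we have $C\setminus U_n \subseteq \delta_n(C) \subseteq \delta_n(R)$, so Proposition~\ref{prop:holes} gives $\abs{C\setminus U_n}\le \poly_1(\abs{Q})$. Multiplying, $\Psi(n) \le 2\abs{Q}^2\cdot\poly_1(\abs{Q})=:M$ for all $n$, and $\Psi(0)=0$ because there are no $0$-active chains.

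Next I would analyse how $\Psi$ evolves. Since $U_n\subseteq U_{n+1}$, every $n$-active chain is $(n+1)$-active, so the family of active chains only grows; as there are at most $2\abs{Q}^2$ bounded chains in total, at most $2\abs{Q}^2$ indices are \emph{activation steps}, i.e.\ steps at which a fresh chain becomes active. For a chain that is active at both $n$ and $n{+}1$, monotonicity of $U_n$ gives $\abs{C\setminus U_{n+1}}\le\abs{C\setminus U_n}$. Now consider a non-activation step with $U_{n+1}\ne U_n$. Every configuration of $U_{n+1}\setminus U_n$ lies in a bounded chain, since the unbounded chains are already contained in $U_0\subseteq U_n$; moreover $U_{n+1}$ is downward closed in each chain by construction, so any bounded chain that receives a new configuration contains a configuration of $U_{n+1}$ lying below a configuration of the chain and is thus $(n+1)$-active, hence---the step being non-activating---already $n$-active. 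Consequently at least one summand of $\Psi$ strictly decreases while none increases, giving $\Psi(n+1)<\Psi(n)$.

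Finally I would carry out the amortised count. The potential $\Psi$ is a non-negative integer, equal to $0$ at $n=0$ and bounded by $M$ throughout; it strictly drops at each non-activation step where $U_n$ grows and can increase only at the at most $2\abs{Q}^2$ activation steps, by at most $M$ apiece. Hence the total decrease---and therefore the number of non-activation growth steps---is at most $2\abs{Q}^2\cdot M$, and the number of steps at which $U_n$ grows is at most $2\abs{Q}^2\cdot M + 2\abs{Q}^2$, a polynomial in $\abs{Q}$; thereafter $U_{n+1}=U_n$ and the sequence has stabilised. The main obstacle, and the reason this is more than the per-residue-class remark preceding the statement, is precisely the bookkeeping that makes $\Psi$ work globally: the summands must be indexed by the globally polynomial family of active chains (bounded via Remark~\ref{lem:2q}) rather than by the exponentially many residue classes, and the increases incurred when fresh chains activate must be absorbed into the amortised budget. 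Establishing cleanly that every modified chain at a non-activation step is already active---so that no summand of $\Psi$ ever increases outside activation steps---is the one point requiring care.
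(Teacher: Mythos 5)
Your proof is correct, and it reaches the corollary by a route that is close in spirit to the paper's but with genuinely different bookkeeping. The paper argues per residue class: it observes (leaning on the internals of the proof of Lemma~\ref{lem:claim}) that for each residue class $R$ the quantity $\abs{\delta_n(R)}$ strictly decreases except at the boundedly many steps where a fresh chain of $R$ activates, and then combines the bound $\abs{\delta_n(R)}\le\poly_1(\abs{Q})$ of Proposition~\ref{prop:holes} with the fact that only polynomially many residue classes are non-trivial. You instead aggregate everything into the single global potential $\Psi(n)=\sum_{C}\abs{C\setminus U_n}$, the sum ranging over $n$-active chains $C$, and run a standard amortised analysis: $\Psi$ is polynomially bounded (via $C\setminus U_n\subseteq\delta_n(C)\subseteq\delta_n(R)$, Proposition~\ref{prop:holes} and Remark~\ref{lem:2q}), strictly decreases at every non-activation step at which $U_n$ grows, and can increase only at the at most $2\abs{Q}^2$ activation steps. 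What your version buys is self-containedness: you use only the \emph{statement} of Proposition~\ref{prop:holes} rather than a monotonicity property extracted from the proof of Lemma~\ref{lem:claim}, and the one delicate point---that every configuration of $U_{n+1}\setminus U_n$ lies in a bounded and hence (at a non-activation step) already-active chain, so that some summand of $\Psi$ genuinely drops---is argued cleanly. The only caveat, which applies equally to the paper's own one-paragraph justification, is that the polynomial you obtain (of order $\abs{Q}^4\cdot\poly_1(\abs{Q})$) is not literally the $\poly_1$ named in the statement; both arguments really establish stabilisation in polynomially many steps, with the corollary's $\poly_1$ read as a generic polynomial.
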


\subsection{Computing $\Conf_{\infty}$ and Deciding Unboundedness}\label{subsec:algo}
In this section we show how to compute $\Conf_{\infty}$  
in polynomial time and how 
 to decide in polynomial time whether 
the initial configuration $(s,0)$ can reach $\Conf_{\infty}$.

We start by showing that if a configuration can reach $U_n$ via a
  primitive run, then it can also reach $U_n$ via a polynomial-length
  run (see Appendix~\ref{apx: primitive-2 proof} for the proof).  


\begin{proposition}
There exists a polynomial $\poly_7$ such that the following holds. Let $(q,z),(q',z')\in \Conf_+$ and let $(q,z) \stackrel{\pi}{\rightarrow} (q',z')$ 
be a valid run
such that $(q',z') \in U_n$ and $\pi$ is primitive.  Then there is a
valid run $(q,z) \stackrel{\pi'}{\rightarrow} (q',z'')$ such
that $(q',z'') \in U_n$ and $\pi'$ has length at most
$\poly_7(|Q|).$
\label{prop:primitive-2}
\end{proposition}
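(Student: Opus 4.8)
The plan is to apply the shortening result, Proposition~\ref{prop:primitive}, to the given run and then to repair validity by prepending a polynomial number of copies of the positive cycle $\gamma_q$, reusing the case analysis from the proof of Lemma~\ref{lem:claim}. If $(q,z)\in U_n$ there is nothing to prove, so I would assume $(q,z)\notin U_n$; since $U_0$ contains every unbounded chain, $(q,z)$ then lies in a bounded chain $C$ of its residue class, and iterating $\gamma_q$ climbs $C$ by valid steps. Proposition~\ref{prop:primitive} turns the primitive run $(q,z)\xrightarrow{\pi}(q',z')$ into a short run $(q,z)\xrightarrow{\pi'}(q',z'')$ of length at most $\abs{Q}^2+2$ with $\pmin(\pi')\ge\pmin(\pi)$, $z''\ge z'$, and whose $q'$-residue class $R'$ is either trivial or that of $(q',z')$. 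Because $\pmin(\pi')\ge\pmin(\pi)\ge -z$ and iterating $\gamma_q$ only raises the counter, no run $\gamma_q^{\,i}\cdot\pi'$ ever goes negative; hence $\pi'$ can fail to be valid only by violating one of the at most $\abs{Q}^2+2$ guards it meets, so $\abs{\blocked(\pi')}\le\abs{Q}^2+2$.

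The core step is to pick $i$ so that $\gamma_q^{\,i}\cdot\pi'$ is valid and lands in $U_n$, with $i$ polynomially bounded. I would split on the order $d:=W_{q'}/\gcd(W_q,W_{q'})$ of $W_q$ in $\ZZ/W_{q'}\ZZ$, exactly as in Lemma~\ref{lem:claim}. If $d>\abs{Q}$, a pigeonhole over the at most $\abs{Q}^2+2$ blocked starting values yields $\abs{Q}+1$ consecutive iterates of $\gamma_q$ inside $C$ whose starting values avoid $\blocked(\pi')$; their $\pi'$-images occupy $\abs{Q}+1$ pairwise distinct $q'$-residue classes, and since at most $\abs{Q}$ of them are non-trivial, one image lands in a trivial class, hence in $U_0\subseteq U_n$. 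If $d\le\abs{Q}$, I instead restrict to iterates with $i\equiv 0\pmod d$, whose $\pi'$-images all lie in $R'$: if $R'$ is trivial they already lie in $U_0$, and otherwise $R'$ is the class of $(q',z')$ and every image has value $\ge z''\ge z'$, so by the almost-upward-closedness of $U_n$ quantified in Proposition~\ref{prop:holes} all but $\poly_1(\abs{Q})$ of them lie in $U_n$. Discarding also the $\le\abs{Q}^2+2$ blocked iterates, a valid iterate landing in $U_n$ then appears among the first $\poly(\abs{Q})$ multiples of $d$, giving a run of length $\abs{Q}^2+2$ plus a polynomial iteration count.

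The step I expect to be the main obstacle is exactly the \emph{availability} of these iterations: $(q,z)$ may sit at or near the top of the bounded chain $C$, so that only a few --- possibly zero --- iterations of $\gamma_q$ are valid before a guard blocks the cycle, and then neither pigeonhole window above has room. The repair is a dichotomy. When $C$ offers at least the required $\poly(\abs{Q})$ iterations above $(q,z)$, the argument of the previous paragraph applies verbatim. When it does not, the few available iterates must be analysed directly: one argues that in this regime $\pi'$ is already valid from $(q,z)$ and, by the residue-class dichotomy of Proposition~\ref{prop:primitive}, either lands in a trivial class (hence in $U_0\subseteq U_n$) or lands in the class of $(q',z')$ at a value that the almost-upward-closed bound of Proposition~\ref{prop:holes} already forces into $U_n$. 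Making this split precise --- in particular ruling out that a $\pi'$ blocked from $(q,z)$ can coexist with a top-of-chain position and a primitive valid run to $U_n$ --- is the delicate heart of the proof.

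Everything else is routine bookkeeping: counting the guards along $\pi'$ to bound $\abs{\blocked(\pi')}$, verifying the residue computations modulo $W_{q'}$, invoking Remark~\ref{lem:2q} for the number of chains per residue class, and assembling the final bound $\poly_7(\abs{Q})$ from $\abs{\pi'}\le\abs{Q}^2+2$, the polynomial iteration count, and the polynomial $\poly_1$ supplied by Proposition~\ref{prop:holes}.
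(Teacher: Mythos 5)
Your overall strategy---shorten first via Proposition~\ref{prop:primitive}, then repair validity by pumping $\gamma_q$ at the \emph{start} of the run, reusing the two cases of Lemma~\ref{lem:claim}---is genuinely different from the paper's, and it founders exactly at the point you flag yourself. The paper never prepends new cycles: it keeps a long \emph{valid prefix} of the given run $\pi$ and only shortens the suffix. Concretely, it finds a state $q''$ visited $(\poly'_7(|Q|))^2$ times among the first $|Q|\cdot(\poly'_7(|Q|))^2+2$ configurations of $\pi$, extracts $\poly'_7(|Q|)$ prefixes $\pi_1<\cdots<\pi_{\poly'_7(|Q|)}$ ending in $q''$ whose weights are either all congruent or pairwise incongruent modulo $W_{q'}$, compresses the remaining suffix to a path $\pi''$ of length $O(|Q|^2)$ via Proposition~\ref{prop:primitive}, and then argues that among the runs $\pi_i\cdot\pi''$---each of which begins with a prefix of the valid run $\pi$ (hence valid for free) and ends weakly above $z'$ because only non-positive cycles were deleted---at most $|Q|^2$ are blocked by $\blocked(\pi'')$ and at most $\sum_{R}|\delta_n(R)|$ (respectively $|Q|$, in the incongruent case) fail to land in $U_n$, so one survives.

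Your version has to manufacture that initial segment out of $\gamma_q$-iterations from $(q,z)$, and the fallback you sketch for when those iterations are unavailable does not go through. If $(q,z)$ is the top of its bounded chain---for instance a singleton chain in the sense of Remark~\ref{lem:2q}---then zero iterations of $\gamma_q$ are valid, and nothing forces the shortened path $\pi'$ to be valid from $(q,z)$: Proposition~\ref{prop:primitive} only guarantees $\pmin(\pi')\ge\pmin(\pi)$, i.e.\ the absence of negative counter values, while deleting non-positive cycles \emph{raises} the counter at every subsequent position and can therefore create fresh disequality-guard violations that the original $\pi$ avoided. So a $\pi'$ blocked from $(q,z)$, a top-of-chain position, and a long valid primitive run to $U_n$ can perfectly well coexist, and your dichotomy has no branch that handles this combination. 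Closing the gap essentially requires moving the pumping point into the interior of $\pi$, which is precisely the paper's construction.
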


Recall that $U'_{n+1}$ consists of all configurations in $\Conf_{+}$ with minimal distance to~$U_n$. 
Combining Remark~\ref{rmk: Un primitive to Un+1} and Proposition~\ref{prop:primitive-2}, we have that	the minimal distance from a configuration
$(q,z)\in U'_{n+1}\setminus U_n$ to $U_n$ is at most $\poly_7(|Q|)$.
It follows that we can restrict the
  search for configurations that can reach $U_n$, to those within a
  polynomially-bounded distance to $U_n$.  By itself this is not
  sufficient to obtain a polynomial-time algorithm to decide whether
  $U_n$ is reachable.
   However, using our analysis of the structure of
  $U_n$ in Section~\ref{subsec:struct}, we are able to formulate the
  bounded reachability problem above in a form that admits a
  polynomial-time algorithm.

  Specifically, we consider the \emph{Bounded Coverability problem
    with a Disequality Objective}: Given as input a
  1-VASS~$\VASS=(Q,D,\Delta,w)$ with a distinguished state $q_f$, a
  positive integer~$L$ (written in unary), an initial
  configuration~$(q_0,x_0)$, and a coverability objective of the form
\begin{gather}
  O=\left\{(q_f,x)\mid x\geq \ell \wedge \bigwedge_{i=1}^{m} ({x \not
      \equiv a_i}\mathrel{\mathrm{mod}} W) \wedge \bigwedge_{i=1}^{n}
    (x \neq b_i)\right\},
\label{eq:objective}
\end{gather}
where $\ell,W$ and the $a_i$ and $b_i$ are non-negative integers given
in binary, decide whether $O$ is reachable from $(q_0,x_0)$ via a
valid run of length at most $L$.

\begin{proposition}\label{prop:ptimepath}
The Bounded Coverability problem with a Disequality Objective is decidable in polynomial time.
\end{proposition}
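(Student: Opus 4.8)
The plan is to exploit that $L$ is given in unary, so it suffices to search over runs of polynomial length, and to work with the time-unfolded system in which one copy of each state is created for each step $0,1,\ldots,L$, with an arc from layer $t$ to layer $t+1$ for every transition. The essential difficulty is that, although the number of steps is polynomial, the edge weights are binary, so counter values---and hence the set of counter values reachable at a given state and layer---can be exponentially large. A direct fixed-point computation of this reachable set is hopeless: even on a simple graph it can consist of exponentially many arithmetic progressions (take a state carrying several vertex-disjoint cycles of ``independent'' weights and iterate them in all proportions allowed by the length budget). Thus the heart of the proof must avoid ever materialising the reachable set, and instead track only the information about the counter that the guards and the objective $O$ can actually observe.

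First I would isolate the polynomially many \emph{special values} the instance can distinguish: $0$ (non-negativity), the disequality guard of each state, the threshold $\ell$, and the forbidden points $b_1,\ldots,b_n$. These cut $\mathbb{N}$ into polynomially many intervals, on the interior of which the only surviving constraint is the single congruence family $x\not\equiv a_i\pmod W$. Non-negativity and the guards act as point/threshold filters: applying a transition shifts a block of counter values by a (binary) weight, after which at most the single value $g_q$ is deleted and the negative part is cut, so each filter changes the \emph{description} of a block only boundedly. Using monotonicity of coverability, I would run a dynamic program whose abstract states record, for each control state and layer, which of these intervals are reachable, together with the extremal (minimal and maximal) reachable counter values inside each interval; both extremes are computable by shortest/longest-path dynamic programming over the layered graph, even though the values themselves are exponential. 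This correctly handles every ingredient of $O$ \emph{except} the congruences.

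The key step, and the main obstacle, is to handle $x\not\equiv a_i\pmod W$ with $W$ in binary, since one can neither enumerate residues modulo $W$ nor fall back on tracking a single maximal value, because $O$ is not monotone in the counter. My plan is to decouple the two kinds of information: non-negativity and the guards depend on the actual value and are consulted at every step, whereas the residue modulo $W$ is consulted only once, at the final configuration. I would therefore defer the congruence entirely and reduce the final test, interval by interval, to the question: among the valid $q_0$-to-$q_f$ runs of length at most $L$ that land in a given target interval (so the threshold and the $b_i$ are already accounted for), is some final counter value attained whose residue avoids all of $a_1,\ldots,a_m$? Since the final value is $x_0$ plus a sum of at most $L$ edge weights, this becomes feasibility of a small system coupling the run's length, its total weight modulo $W$, and the interval constraints.

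The subtle point I expect to be hardest is exactly this last reduction, which must simultaneously capture (i) that the chosen collection of edge-traversals is realisable as an ordered run that stays non-negative and dodges the guards, and (ii) that its final weight can be steered, modulo $W$, outside all $m$ forbidden residue classes, all within a budget $L$ that allows each cycle to be repeated only polynomially---not exponentially---often. For (i) I would prove a schedulability criterion: a connected multiset of edges forming a $q_0$-to-$q_f$ walk admits a valid ordering exactly when its least achievable prefix weight is non-negative and the finitely many point-guards can be avoided. For (ii) I would argue by a counting/pumping argument that, because only $m$ residues are forbidden, either the walk admits enough independent residue-shifting cycles within the budget to reach some permitted residue, or the attainable residues are so few that they can be listed explicitly and tested. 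Assembling these tests over the polynomially many intervals and layers then yields the desired polynomial-time procedure.
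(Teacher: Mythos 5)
Your proposal diverges substantially from the paper's proof, and I see two genuine gaps. First, the abstraction you track---which of the polynomially many intervals (cut out by $0$, the guards, $\ell$ and the $b_i$) are reachable at each state and layer, together with the extremal reachable values in each---loses too much information. The set of counter values reachable at a state after $k$ steps is not a union of ``blocks'' that a transition merely shifts and punctures; it is a scattered set of up to exponentially many individual points, and the intermediate guards, the final disequalities $x\neq b_i$, and the congruences $x\not\equiv a_i\pmod W$ are all sensitive to exactly which points are present, not just to the minimum and maximum. Monotonicity only helps against negativity. This is precisely the difficulty the paper's proof is built around: it performs a forward search over $L+1$ rounds but keeps, for each state, a \emph{polynomial-size explicit set} of reachable values---the largest $(n+L)(m+1)$ of them, with at most $n+L$ per residue class modulo $W$---and maintains the invariant that if any reachable configuration can reach $O$ along a path $\pi$ within the remaining budget, then so can one of the retained ones. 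The counting behind the invariant is elementary: a run of length at most $L$ blocks at most $L$ starting values via guards (and none via negativity among larger values), the objective excludes at most $n$ values via the $b_i$ and at most $m$ residue classes via the $a_i$, so retaining $m+1$ residue classes with $n+L$ values each suffices.

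Second, your plan for the deferred congruence test rests on a ``schedulability criterion'' for multisets of edges that is neither proved nor, as stated, effective: ``the least achievable prefix weight is non-negative and the point-guards can be avoided'' is essentially a restatement of the problem, since whether a guard can be dodged depends on the order in which cycles are interleaved, which cannot be read off the multiset, and computing the least achievable prefix weight over all orderings is itself the hard part. The subsequent dichotomy (``enough independent residue-shifting cycles, or else few attainable residues'') is likewise unsubstantiated---the set of residues of total weights of valid length-$\le L$ runs need not be small or enumerable. The paper avoids Parikh-image reasoning entirely; the residue bookkeeping is folded into the pruning rule above rather than into a separate feasibility system. Replacing your interval abstraction by the explicit value set with pruning makes the deferred-congruence machinery unnecessary.
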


We now show how to compute $\Conf_\infty$ in polynomial	time. 
By Corollary~\ref{corl:rounds},
the sequence~$\{U_n\}_{n\in \mathbb{N}}$ converges
in at most $\mathrm{poly}_1(\abs{Q})$ steps.  It	remains	to show	how to
compute~$U_{n+1}$ from $U_n$	in polynomial time for each $n$.

Recall that all unbounded chains are contained in $U_0$ and hence are
contained in $U_n$ for all $n$.  Recall also that the total number of
bounded chains is at most $2\abs{Q}$ and that $U_n$ is downward closed
in each bounded chain.  Thus $U_n$ is determined by giving, for every
bounded chain $C$ such that $U_n\cap C\neq \emptyset$, the maximum
configuration in $U_n \cap C$.  In particular, $U_n$ can be described
in space polynomial in the description of the given 1-VASS.

Recall that $U_{n+1}$ is obtained from $U_n$ by adding the
configurations in $\mathrm{Conf}_+ \setminus U_n$ that have minimum
distance to $U_n$ and then closing downward in each bounded chain.  By
Remark~\ref{rmk: Un primitive to Un+1} and 
Proposition~\ref{prop:primitive-2}, a configuration in
$\mathrm{Conf}_+ \setminus U_n$ that has minimum distance to $U_n$ has
distance at most $\mathrm{poly}_7(\abs{Q})$.  The idea to compute
$U_{n+1}$ from $U_n$ is as follows:

For each bounded chain $C$, and each configuration $(q,x) \in                                                                            
C\setminus U_n$ that is among the top $\mathrm{poly}_1(\abs{Q})$
configurations in $C$, we determine the distance of $(q,x)$ to $U_n$
up to a bound of $\mathrm{poly}_7(\abs{Q})$.  To do this we use the
procedure described in Proposition~\ref{prop:ptimepath}, having
first written $U_n$ as a polynomial-size union of sets
 of the form~\eqref{eq:objective}---see below for
details.  The reason that it suffices to look only among the top
$\mathrm{poly}_1(\abs{Q})$ configurations in each bounded chain is
because we know from Proposition~\ref{prop:holes} that $\abs{C \setminus U_{n+1}}                                                                       
\leq \mathrm{poly}_1(\abs{Q})$ for every $(n+1)$-active chain $C$.

We next show how to decompose $U_n$ into a polynomial union of
sets of the form~\eqref{eq:objective} in order to apply~Proposition~\ref{prop:ptimepath}.
Fixing $q\in Q_+$, let $R_1,\ldots,R_m$ be a list of the non-trivial $q$-residue classes and for each $i\in\{1,\ldots,m\}$,
write $a_i$ for the corresponding residue modulo~$W_q$ and define $\ell_i:=\min(R_i \cap U_n)$.  Moreover,
let $b_1,\ldots,b_k$ be a list of the counter values such that for all $1\le j\le k$ we have
$b_j\geq \ell_i$ and $(q,b_j)\in R_i\setminus U_n$ for some $i$.
Note that  
$m\leq \abs{Q}$ and $k\leq m\,\mathrm{poly}_1(\abs{Q})$, and the corresponding classes and numbers can be enumerated in polynomial time.
We  decompose the set of configurations $\{
(q,z) \in U_n\}$ into the following two components:
\begin{enumerate}
\item $\{(q,z): z\ge \pmin(\gamma_q)\wedge \bigwedge_{i=1}^m z\not\equiv a_i\pmod{W_q}\}$, i.e., all configurations in trivial $q$-residue classes,
\item for all 
$j \in \{1,\ldots,m\}$, the set $\{(q,z): z\ge \ell_j \wedge \bigwedge_{i: i\neq j} z\not\equiv a_i\pmod{W_q} \wedge \bigwedge_{i=1}^k z\neq b_i\}$, which includes
$R_j \cap U_n$ for  the non-trivial residue class~$R_j$.
\end{enumerate}


Finally, it remains to decide whether the configuration $(s,0)$ is
unbounded.  By Proposition~\ref{prop: unbounded iff confplus unbounded}, 
$(s,0)$ is unbounded if and only if it can reach
$\Conf_\infty$.  Now a shortest run from $(s,0)$ to $\Conf_\infty$ is
necessarily primitive: if an internal configuration in such a run lies
in $\Conf_+$ then it is also in $\Conf_{\infty}$---a contradiction.  By
Proposition~\ref{prop:primitive-2}, a shortest run from $(s,0)$ to
$\Conf_\infty$ has length at most $\mathrm{poly}_7(\abs{Q})$.  Thus we
can decide whether such a run exists in polynomial time using
Proposition~\ref{prop:ptimepath}.  In conclusion we have

	\begin{theorem}\label{thm:unboundedness-ptime}
		The Unboundedness Problem and the Coverability Problem for 1-VASS with disequality tests
		are decidable in polynomial time.
	\end{theorem}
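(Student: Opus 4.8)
The plan is to assemble the machinery developed in Sections~\ref{subsec: classes and chains}--\ref{subsec:algo} into a single polynomial-time decision procedure, treating Coverability as a special case of Unboundedness. First I would dispose of Coverability: by Lemma~\ref{lem:cov2unbound} it reduces to Unboundedness via an $\NC^2$-computable many-one reduction, and since $\NC^2 \subseteq \P$ this reduction is computable in polynomial time; hence it suffices to solve Unboundedness in polynomial time. For Unboundedness, Proposition~\ref{prop: unbounded iff confplus unbounded} reduces the question of whether $(s,0)$ is unbounded to whether $(s,0)$ can reach $\Conf_\infty$, the set of unbounded configurations in $\Conf_+$. So the two remaining tasks are (a) to compute a usable representation of $\Conf_\infty$, and (b) to decide reachability of $\Conf_\infty$ from $(s,0)$, both in polynomial time.

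For task (a) I would compute $\Conf_\infty = \bigcup_{n} U_n$ through the inductive construction. By Corollary~\ref{corl:rounds} the sequence $(U_n)_n$ stabilizes after at most $\poly_1(\abs{Q})$ steps, so it is enough to compute $U_{n+1}$ from $U_n$ in polynomial time. The key representational fact is that $\Conf_+$ decomposes into at most $2\abs{Q}$ bounded chains (Remark~\ref{lem:2q}) plus unbounded chains that lie in $U_0 \subseteq U_n$, and that $U_n$ is downward closed in every chain; hence $U_n$ is specified in polynomial space by recording, for each bounded chain $C$ meeting $U_n$, the maximal configuration of $U_n \cap C$. To pass from $U_n$ to $U_{n+1}$ I add the configurations of $\Conf_+ \setminus U_n$ at minimal distance to $U_n$ and then close downward in each chain. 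Two bounds make this feasible: by Remark~\ref{rmk: Un primitive to Un+1} together with Proposition~\ref{prop:primitive-2} such a minimal-distance configuration lies within distance $\poly_7(\abs{Q})$ of $U_n$; and by Proposition~\ref{prop:holes} at most $\poly_1(\abs{Q})$ configurations of any $(n+1)$-active chain lie outside $U_{n+1}$, so it suffices to test only the top $\poly_1(\abs{Q})$ configurations of each bounded chain.

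The distance tests in task (a), and the final reachability test in task (b), are both instances of the Bounded Coverability problem with a Disequality Objective, which is solvable in polynomial time by Proposition~\ref{prop:ptimepath}. To invoke it I would first rewrite $U_n \cap (\{q\} \times \NN)$, for each $q \in Q_+$, as a polynomial-size union of objective sets of the form~\eqref{eq:objective}: one set collecting all trivial $q$-residue classes (a single family of congruence conditions ``$z \not\equiv a_i \pmod{W_q}$'' over the at most $\abs{Q}$ nontrivial residues), and, for each nontrivial residue class $R_j$, one set of the form ``$z \geq \ell_j$, in residue class $R_j$, minus the polynomially many holes $b_i$ of $U_n$.'' For task (b) itself, a shortest run from $(s,0)$ to $\Conf_\infty$ must be primitive, since any internal configuration in $\Conf_+$ would already lie in $\Conf_\infty$ and contradict minimality; by Proposition~\ref{prop:primitive-2} it therefore has length at most $\poly_7(\abs{Q})$, and its existence is decided by one application of Proposition~\ref{prop:ptimepath}.

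\textbf{Main obstacle.} The genuinely hard work is not this final assembly but the structural control of $\delta_n(R)$ carried out in Section~\ref{subsec:struct}: Lemma~\ref{lem:claim} and Proposition~\ref{prop:holes} simultaneously guarantee that the induction converges in polynomially many rounds and that only polynomially many candidate configurations per chain need be examined in each round. Granting those, the remaining subtlety in my plan is purely the bookkeeping that keeps the objective-set decomposition of $U_n$ of polynomial size while faithfully describing $U_n$ state by state, so that each call to Proposition~\ref{prop:ptimepath} is on a polynomial-size instance; once this is arranged, every step above runs in polynomial time and the theorem follows.
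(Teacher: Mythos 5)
Your proposal follows the paper's own proof essentially step for step: the same reduction of Coverability to Unboundedness via Lemma~\ref{lem:cov2unbound}, the same computation of $\Conf_\infty$ by iterating the $U_n$ construction using Corollary~\ref{corl:rounds}, Proposition~\ref{prop:holes}, Proposition~\ref{prop:primitive-2} and the decomposition of $U_n$ into objectives of the form~\eqref{eq:objective} fed to Proposition~\ref{prop:ptimepath}, and the same final primitivity argument for the run from $(s,0)$ to $\Conf_\infty$. It is correct and matches the paper's argument.
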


\section{Unboundedness for 1-VASS}\label{sec:1vass}
In this section we show that the Unboundedness Problem for 1-VASS (i.e., with no
disequality tests) is in $\NC^2$.
 Recall that $\NC^i$ is the class of decision
problems solvable in time $O(\log^i n)$, with $n$ the size of the input, on a
parallel computer with a polynomial number of
processors~\cite{papadimitriou94,ab09}.

Let $\VASS=(Q,\Delta,w)$ be a 1-VASS with a distinguished state $s \in Q$.  We
want to decide whether the configuration $(s,0)$ is unbounded.  Since $\VASS$
has no disequality tests, deleting a negative-weight or zero-weight cycle that
appears as an infix of a valid run yields another valid run.  It follows that
$(s,0)$ is unbounded if and only if there is a valid run from $(s,0)$ consisting
of a simple path (of length at most $\abs{Q}$) followed by a positive-weight
simple cycle (again, of length at most $\abs{Q}$).  We call such a run a
\emph{lasso}.

Let $\VASS=(Q,\Delta,w)$ be a 1-VASS and let $\pi=q_1,\ldots,q_n$ be a path
in $\VASS$. Recall that  a (possibly empty) prefix of
$\pi$ is said to be \emph{minimal} if it has minimal weight among all
prefixes of $\pi$.  Likewise a (possibly empty) suffix of $\pi$ is
said to be \emph{maximal} if it has maximal weight among all suffixes.
It is clear that $q_1,\ldots,q_m$ is a minimal prefix of $\pi$ if and
only if $q_m,\ldots,q_n$ is a maximal suffix.  In such a case let us
call $q_m$ a \emph{nadir} of $\pi$ (the nadir is the lowest point
reached in any run over $\pi$).  Recall that $\mathrm{pmin}(\pi)$ is the
weight of a minimal prefix of $\pi$; correspondingly we  define $\mathrm{smax}(\pi)$ to be the
weight of a maximal suffix.

Given paths $\pi$ and $\pi'$, say that $\pi$ is \emph{dominated} by
$\pi'$ if $\mathrm{pmin}(\pi) \leq \mathrm{pmin}(\pi')$ and
$\mathrm{smax}(\pi) \leq \mathrm{smax}(\pi')$. 
Observe that if $\pi$ is dominated by~$\pi'$ then 
$\weight(\pi)\leq \weight(\pi')$.
\begin{example}
  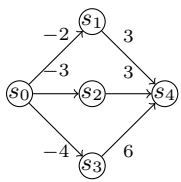
\begin{figure}[ht]
    \centering
    \begin{tikzpicture}[inner sep=2mm, auto, node distance=0.6cm, every
      state/.style={font=\small,inner sep=0, minimum
      size=1mm,},el/.style={font=\scriptsize}]
      \node[state](s0){$s_0$};
      \node[state,right= of s0](s2){$s_2$};
      \node[state,above= of s2](s1){$s_1$};
      \node[state,below= of s2](s3){$s_3$};
      \node[state,right= of s2](s4){$s_4$};
      \path[->]
        (s0) edge node[el,above]{$-2$} (s1)
        (s0) edge node[el,above]{$-3$} (s2)
        (s0) edge node[el,below]{$-4$} (s3)
        (s1) edge node[el,above]{$3$} (s4)
        (s2) edge node[el,above]{$3$} (s4)
        (s3) edge node[el,below]{$6$} (s4)
      ;
    \end{tikzpicture}
    \caption{The topmost path dominates the middle one; the bottom path
    dominates no other path}
    \label{fig:domination-example}
  \end{figure}
  In Figure~\ref{fig:domination-example}, the path $s_0,s_1,s_4$ dominates $s_0,s_2,s_4$. However, despite it being
  the case that $\weight(s_0,s_3,s_4) > \weight(s_0,s_2,s_4)$, $s_0,s_3,s_4$
  does not dominate $s_0,s_2,s_4$ since the weight of a minimal prefix of the former is
  smaller than that of the latter.
\end{example}
Fix two states
$p,q \in Q$ and let $P$ be a set of $p$-$q$ paths.  We say that a set
$P'$ of $p$-$q$ paths is a \emph{Pareto set} for $P$ if for every
$\pi \in P$ there exists $\pi' \in P'$ such that $\pi$ is dominated by
$\pi'$.

We observe some simple properties of Pareto sets:
\begin{lemma}
Let $p,q,r \in Q$.  Then all of the following statements hold:
\begin{enumerate}
\item If $P_1,P_2,P_3$ are sets of $p$-$q$ paths such that $P_1$ is a Pareto set of
  $P_2$ and $P_2$ is a Pareto set of $P_3$, then $P_1$ is a Pareto set of
  $P_3$.
\item If $P,R$ are sets of $p$-$q$ paths with
  respective Pareto sets $P',R'$, then
  $P'\cup R'$ is a Pareto set for
  $P\cup R$
\item If $P$ is a set of $p$-$q$ paths and $R$ is
  a set of $q$-$r$ paths with respective Pareto sets
  $P',R'$, then $P'\cdot R'$
  is a Pareto set of $P\cdot R$.
\end{enumerate}
\label{lem:simple}
\end{lemma}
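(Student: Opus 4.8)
The plan is to prove the three closure properties of Pareto sets in sequence, each reducing to the definitions of $\pmin$, $\mathrm{smax}$, and domination, together with the basic fact that these quantities behave well under path concatenation. The three statements are of increasing difficulty, and the third is where the real work lies, so I would prove transitivity and union-closure quickly and then concentrate effort on the concatenation property.

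First I would dispatch statement~(1). Given $\pi \in P_3$, since $P_2$ is a Pareto set for $P_3$ there is $\pi'' \in P_2$ dominating $\pi$, and since $P_1$ is a Pareto set for $P_2$ there is $\pi' \in P_1$ dominating $\pi''$. It then suffices to observe that domination is transitive: if $\pmin(\pi)\le\pmin(\pi'')\le\pmin(\pi')$ and $\mathrm{smax}(\pi)\le\mathrm{smax}(\pi'')\le\mathrm{smax}(\pi')$, then $\pi$ is dominated by $\pi'$. This is immediate from the definition and the transitivity of $\le$ on $\ZZ$. Statement~(2) is equally direct: any $\pi \in P\cup R$ lies in $P$ or in $R$; in the first case the Pareto set $P'$ supplies a dominating path and in the second $R'$ does, and in both cases the dominating path lies in $P'\cup R'$.

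The substance is in statement~(3), so the key step is to understand how $\pmin$ and $\mathrm{smax}$ of a concatenation $\pi\cdot\sigma$ relate to those of the factors. The crucial identities I would establish (or invoke as elementary facts about prefix minima and suffix maxima) are
\[
  \pmin(\pi\cdot\sigma) = \min\{\pmin(\pi),\ \weight(\pi)+\pmin(\sigma)\}
\]
and, dually,
\[
  \mathrm{smax}(\pi\cdot\sigma) = \max\{\mathrm{smax}(\sigma),\ \weight(\sigma)+\mathrm{smax}(\pi)\}.
\]
These hold because a minimal prefix of $\pi\cdot\sigma$ either stops within $\pi$ (contributing $\pmin(\pi)$) or runs through all of $\pi$ and into $\sigma$ (contributing $\weight(\pi)+\pmin(\sigma)$), and symmetrically for suffixes. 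Note also that $\weight$ is additive: $\weight(\pi\cdot\sigma)=\weight(\pi)+\weight(\sigma)$.

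With these identities in hand I would finish as follows. Take any path in $P\cdot R$; it has the form $\pi\cdot\sigma$ with $\pi\in P$ and $\sigma\in R$. Choose $\pi'\in P'$ dominating $\pi$ and $\sigma'\in R'$ dominating $\sigma$, so $\pi'\cdot\sigma'\in P'\cdot R'$; I claim $\pi\cdot\sigma$ is dominated by $\pi'\cdot\sigma'$. For the $\pmin$ inequality I would compare the two terms of the $\min$: we have $\pmin(\pi)\le\pmin(\pi')$ directly, and for the second term I use $\weight(\pi)\le\weight(\pi')$ (which the excerpt already observes follows from domination) together with $\pmin(\sigma)\le\pmin(\sigma')$ to get $\weight(\pi)+\pmin(\sigma)\le\weight(\pi')+\pmin(\sigma')$; taking the minimum of each pair preserves the inequality. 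The $\mathrm{smax}$ inequality is handled symmetrically using $\mathrm{smax}(\sigma)\le\mathrm{smax}(\sigma')$, $\weight(\sigma)\le\weight(\sigma')$, and $\mathrm{smax}(\pi)\le\mathrm{smax}(\pi')$. The main obstacle I anticipate is not any deep argument but the bookkeeping in the $\min$/$\max$ comparison: one must be careful that domination bounds \emph{both} the prefix-minimum and the total weight, since the concatenation formula mixes the two, and it is precisely the combined hypothesis $\pmin(\pi)\le\pmin(\pi')$ \emph{and} $\weight(\pi)\le\weight(\pi')$ (the latter a consequence of, not an addition to, domination) that makes the term-by-term comparison go through.
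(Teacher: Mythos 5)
Your proof is correct and follows essentially the same route as the paper: items (1) and (2) are dispatched by transitivity of domination and a case split, and item (3) uses the identity $\pmin(\pi\cdot\sigma)=\min\{\pmin(\pi),\weight(\pi)+\pmin(\sigma)\}$ (and its dual for $\mathrm{smax}$) together with the observation that domination implies $\weight(\pi)\le\weight(\pi')$. Your treatment is, if anything, slightly more explicit than the paper's about why the term-by-term comparison under the $\min$ goes through.
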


\begin{proposition}
  Let $p,q \in Q$.  Then every set $P$ of $p$-$q$ paths of length at
  most $k$ has a Pareto set $P'$ of cardinality at most $|Q|$ such
  that each path in $P'$ has length at most $2k$.  Moreover such a set
  $P'$ can be computed from $P$ in $\NC^1$.
\label{prop:filter}
\end{proposition}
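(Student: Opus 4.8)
The plan is to use the fact that, for the purposes of domination, a $p$-$q$ path $\pi$ is completely described by the pair $(\pmin(\pi),\mathrm{smax}(\pi))$, together with the identity $\mathrm{smax}(\pi)=\weight(\pi)-\pmin(\pi)$, which follows at once from inspecting the partial sums of the edge weights along $\pi$. First I would record the compositional behaviour of these quantities: for $\pi=\alpha\cdot\beta$ one has $\pmin(\pi)=\min\{\pmin(\alpha),\,\weight(\alpha)+\pmin(\beta)\}$ and $\mathrm{smax}(\pi)=\max\{\mathrm{smax}(\beta),\,\mathrm{smax}(\alpha)+\weight(\beta)\}$. Calling a path $\alpha$ \emph{descending} if it is a minimal prefix of itself (equivalently $\mathrm{smax}(\alpha)=0$) and a path $\beta$ \emph{ascending} if it is a maximal suffix of itself (equivalently $\pmin(\beta)=0$), these formulas specialise to the clean identities $\pmin(\alpha\cdot\beta)=\weight(\alpha)$ and $\mathrm{smax}(\alpha\cdot\beta)=\weight(\beta)$ whenever $\alpha$ is descending and $\beta$ is ascending.

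I would then build $P'$ by grouping paths according to their nadir. Given $\pi\in P$ with nadir state $r$, split $\pi=\alpha^\pi\cdot\beta^\pi$ at $r$, so that $\alpha^\pi$ is a descending $p$-$r$ path with $\weight(\alpha^\pi)=\pmin(\pi)$ and $\beta^\pi$ is an ascending $r$-$q$ path with $\weight(\beta^\pi)=\mathrm{smax}(\pi)$; both have length at most $k$. For each state $r$ occurring as a nadir, let $\alpha_r$ be a maximum-weight $\alpha^\pi$ and $\beta_r$ a maximum-weight $\beta^\pi$ over all $\pi\in P$ with nadir $r$, and put $\pi_r:=\alpha_r\cdot\beta_r$ and $P':=\{\pi_r\}$. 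By the specialised identities, $\pmin(\pi_r)=\weight(\alpha_r)$ and $\mathrm{smax}(\pi_r)=\weight(\beta_r)$.

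The three claimed properties then follow. There is at most one $\pi_r$ per state, so $\abs{P'}\le\abs{Q}$; and since $\alpha_r,\beta_r$ each have length at most $k$, the spliced path $\pi_r$ has length at most $2k$. For the Pareto property, take any $\pi\in P$ with nadir $r$: maximality gives $\weight(\alpha_r)\ge\weight(\alpha^\pi)=\pmin(\pi)$ and $\weight(\beta_r)\ge\weight(\beta^\pi)=\mathrm{smax}(\pi)$, hence $\pmin(\pi_r)\ge\pmin(\pi)$ and $\mathrm{smax}(\pi_r)\ge\mathrm{smax}(\pi)$, i.e. $\pi$ is dominated by $\pi_r$. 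The subtle point, which I expect to be the main obstacle to phrase correctly, is precisely this ``mixing'': $\alpha_r$ and $\beta_r$ may originate from two different paths of $P$, and it is only because splicing a descending prefix onto an ascending suffix preserves both coordinates exactly (the specialised identities above) that a single representative per nadir state simultaneously attains the best available $\pmin$ and the best available $\mathrm{smax}$, yielding the bound $\abs{Q}$ rather than something larger.

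Finally, for the $\NC^1$ bound I would compute everything by parallel prefix operations, processing the paths of $P$ independently and in parallel. For a fixed $\pi$, the partial sums of its edge weights and their running minimum are prefix computations of logarithmic depth, yielding $\pmin(\pi)$, a nadir position, and thus the split $\alpha^\pi,\beta^\pi$ together with $\weight(\alpha^\pi)=\pmin(\pi)$ and $\weight(\beta^\pi)=\mathrm{smax}(\pi)$. Grouping by nadir and selecting, for each of the at most $\abs{Q}$ states, a maximum-weight descending prefix and a maximum-weight ascending suffix amounts to taking maxima of polynomially many binary integers, and assembling the $\pi_r$ is then immediate. Since iterated addition, prefix-minima, comparison, and maxima of polynomially many binary-encoded integers all lie in $\NC^1$, the entire construction is in $\NC^1$.
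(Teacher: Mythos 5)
Your proof is correct and follows essentially the same route as the paper: decompose each path at a nadir into a minimal prefix and maximal suffix, take per-nadir-state maximum-weight representatives of each piece, and splice them, with the $\NC^1$ bound coming from parallel iterated addition and maximum selection. The ``mixing'' point you flag (that the chosen prefix and suffix may come from different paths of $P$) is exactly the observation underlying the paper's one-line claim that the spliced path dominates every path in $P$ with that nadir; your specialised identities just make it explicit.
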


\subsection*{An $\NC^2$ Upper Bound}

\begin{theorem}\label{thm:nc2upper}
The Unboundedness Problem and the Coverability Problem for
  1-VASS are  decidable in $\NC^2$.
\end{theorem}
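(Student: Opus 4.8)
The plan is to reduce to the lasso characterisation recorded just above the statement and then to detect a lasso using the Pareto-set machinery of Lemma~\ref{lem:simple} and Proposition~\ref{prop:filter}. By Lemma~\ref{lem:cov2unbound} it suffices to place the Unboundedness Problem in $\NC^2$, since $\NC^2$ is closed under composition with $\NC^2$-computable many-one reductions. Recall that $(s,0)$ is unbounded iff there is a valid lasso: a simple $s$-$q$ path $\sigma$ followed by a positive-weight simple cycle $\gamma$ on $q$. Starting the run at $(s,0)$, validity of the stem amounts to $\pmin(\sigma)\ge 0$, after which we sit at $(q,\weight(\sigma))$; the cycle can then be entered and iterated forever precisely when $\weight(\sigma)+\pmin(\gamma)\ge 0$, using $\weight(\gamma)>0$ so that every later iteration has an even higher nadir. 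Thus unboundedness is equivalent to the existence of $q\in Q$, a simple $s$-$q$ path $\sigma$ with $\pmin(\sigma)\ge 0$, and a positive-weight simple cycle $\gamma$ on $q$ with $\weight(\sigma)+\pmin(\gamma)\ge 0$.

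Next I would observe that this condition is invariant under domination, so it suffices to range over Pareto representatives rather than over all (exponentially many) simple paths and cycles. If $\sigma$ is dominated by $\sigma'$ then $\pmin(\sigma')\ge\pmin(\sigma)\ge 0$ and $\weight(\sigma')\ge\weight(\sigma)$, so $\sigma'$ is an at-least-as-good stem; and since $\weight=\pmin+\mathrm{smax}$, if the positive cycle $\gamma$ is dominated by $\gamma'$ then $\weight(\gamma')\ge\weight(\gamma)>0$ while $\pmin(\gamma')\ge\pmin(\gamma)$, so $\gamma'$ is an at-least-as-good cycle. Hence it is enough to compute, for every pair of states $p,q$, a Pareto set $P'_{p,q}$ of cardinality at most $\abs{Q}$ for the set of all $p$-$q$ paths of length at most $\abs{Q}$, and then to test, for each $q$, whether some $\sigma\in P'_{s,q}$ with $\pmin(\sigma)\ge 0$ and some $\gamma\in P'_{q,q}$ with $\weight(\gamma)>0$ satisfy $\weight(\sigma)+\pmin(\gamma)\ge 0$. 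Evaluating $\pmin$, $\mathrm{smax}$, and $\weight$ of a given polynomial-length path reduces to iterated integer addition and prefix minima, which is in $\NC^1$.

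The computational heart of the proof is to produce the Pareto sets $P'_{p,q}$ in $\NC^2$ without enumerating paths, which I would do by iterated squaring. For $i\ge 0$ let $P^{(i)}_{p,q}$ denote a Pareto set for all $p$-$q$ paths of length at most $2^i$; the base case $i=0$ consists of the edge $(p,q)$, if present, together with the empty path when $p=q$, all given explicitly. For the inductive step, any $p$-$q$ path of length at most $2^{i+1}$ splits at its midpoint into a $p$-$r$ prefix and an $r$-$q$ suffix, each of length at most $2^i$; hence, by parts (2) and (3) of Lemma~\ref{lem:simple}, the set $\bigcup_{r\in Q} P^{(i)}_{p,r}\cdot P^{(i)}_{r,q}$ is a Pareto set for all such paths (a Pareto set for a superset is also a Pareto set for any subset). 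This union has cardinality at most $\abs{Q}^3$ and consists of paths of polynomially bounded length, so Proposition~\ref{prop:filter} prunes it in $\NC^1$ to a Pareto set $P^{(i+1)}_{p,q}$ of cardinality at most $\abs{Q}$ whose representatives again have polynomially bounded length (the length bound at most quadruples per round, hence is $O(\abs{Q}^2)$ after $\lceil\log_2\abs{Q}\rceil$ rounds). After $\lceil\log_2\abs{Q}\rceil$ rounds we obtain Pareto sets for all paths of length at most $2^{\lceil\log_2\abs{Q}\rceil}\ge\abs{Q}$, which in particular dominate all simple paths and cycles. Each round runs in $\NC^1$ depth over polynomially many independent subproblems, so the $O(\log\abs{Q})$ rounds compose to $O(\log^2\abs{Q})$ depth, i.e.\ $\NC^2$, and the final test over representatives is a further $\NC^1$ step; Lemma~\ref{lem:cov2unbound} then transfers the bound to coverability. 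The main obstacle is precisely this squaring construction: checking that the composition and pruning steps are faithful via Lemma~\ref{lem:simple} and Proposition~\ref{prop:filter}, and that both cardinalities and path lengths stay polynomial across all $O(\log\abs{Q})$ rounds, so that the whole computation remains within $\NC^2$.
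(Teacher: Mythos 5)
Your proposal is correct and follows essentially the same route as the paper's proof: reduce coverability to unboundedness via Lemma~\ref{lem:cov2unbound}, use the lasso characterisation, build Pareto sets for paths of length at most $\abs{Q}$ by $\lceil\log\abs{Q}\rceil$ rounds of midpoint splitting (Lemma~\ref{lem:simple}, Items 2 and 3) followed by $\NC^1$ pruning (Proposition~\ref{prop:filter}), and finally test stem--cycle pairs among the representatives. The only difference is cosmetic: you spell out explicitly why domination preserves the existence of a valid lasso, which the paper leaves implicit.
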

\begin{proof}
  By Lemma~\ref{lem:cov2unbound}, it will suffice to show that Unboundedness is in $\NC^2$.

  Let $\VASS=(Q,\Delta,w)$ be a 1-VASS.  Given $p,q \in Q$ and
  $m \in \mathbb{N}$, denote by $\mathrm{Paths}_{p,q,m}$ the set of all
  $p$-$q$ paths in $\mathcal{V}$ of length at most $m$.  

  Given a state $s\in Q$, recall that
  $(s,0)$ is unbounded if and only if there exists a lasso run that
  starts at $(s,0)$. 
  To determine the existence of such a run we
  compute a Pareto set $P_q$ for $\mathrm{Paths}_{s,q,\abs{Q}}$ and a
  Pareto set $P'_q$ for $\mathrm{Paths}_{q,q,\abs{Q}}$ for every state
  $q \in Q$.  Having done this we look for $q\in Q$ and paths
  $\pi \in P_q$ and $\pi' \in P'_q$ such that $\pi\cdot \pi'$ induces a valid
  run from $(s,0)$ and $\pi'$ has positive weight.


  It remains to show how to compute a Pareto set of
  $\mathrm{Paths}_{p,q,\abs{Q}}$  for all pairs of states~$p,q\in Q$
  (together with the values $\mathrm{weight}(\pi)$ and
  $\pmin(\pi)$ for every path $\pi$ in the Pareto set) in
  $\NC^2$.

  For $k=1,\ldots,\lceil \log |Q| \rceil$, we show how to compute a
  family $\mathcal{P}_k=\{ P_{p,q,k} \}_{p,q \in Q}$ such that for all~$p,q \in Q$: 
  \begin{enumerate}
  \item $P_{p,q,k}$ is a Pareto set for $\mathrm{Paths}_{p,q,2^k}$;
  \item $P_{p,q,k} \subseteq \mathrm{Paths}_{p,q,4^k}$;
  \item $|P_{p,q,k}|\leq |Q|$.
  \end{enumerate}
  By Item 1, if $k= \lceil \log |Q| \rceil$ then
  $P_{p,q,k}$ is a Pareto set for $\mathrm{Paths}_{p,q,|Q|}$.
  (Note that for $k=
  \lceil \log |Q| \rceil$, $\mathcal{P}_k$ consists of paths of length
  at most $|Q|^2$.)

  The construction of $\mathcal{P}_k$ is by induction on $k$.  Suppose
  we have computed $\mathcal{P}_k$ with Properties 1-3~above.  Fix
  $p,q \in Q$.  In order to compute $P_{p,q,k+1}$ we observe that
  \begin{equation}
    P:=\{ \pi_1 \cdot \pi_2 : \exists r\in Q (\pi_1 \in P_{p,r,k} \wedge
    \pi_2 \in P_{r,q,k}) \}
  \label{eq:setP}
  \end{equation}
  is a Pareto set for $\mathrm{Paths}_{p,q,2^{k+1}}$ by Items 2 and 3 of
  Lemma~\ref{lem:simple}.
  Applying
  Proposition~\ref{prop:filter}, we obtain a Pareto set $P'$ for $P$ of
  cardinality at most $|Q|$. 
  By Item 1 of
  Lemma~\ref{lem:simple}, $P'$ is a Pareto set for
  $\mathrm{Paths}_{p,q,2^{k+1}}$.  Finally, it is clear from the length
  bound in Proposition~\ref{prop:filter} that all paths in $P'$ have
  length at most $4^{k+1}$.  Thus we define $P_{p,q,k+1}:=P'$.

  It remains to establish the $\NC^2$ complexity bound
  for computing $\mathcal{P}_{\lceil \log |Q|
  \rceil}$.  For this it suffices to show that for all $k$ the computation of
  $\mathcal{P}_{k+1}$ from $\mathcal{P}_k$ can be carried out in
  $\NC^1$.  But we may compute each set $P_{p,q,{k+1}}$ in
  parallel (over $p,q\in Q$), and the computation of each such set can
  be done in $\NC^1$ by Proposition~\ref{prop:filter}.
\end{proof}

\section{Conclusion}
We have shown that control-state reachability for 1-VASS with
disequality tests can be solved in polynomial time.  The complexity of
reaching a given \emph{configuration} in this model is open (being
equivalent to control-state reachability in the presence of both
equality and disequality tests), lying between NP and PSPACE.  For
multi-dimensional VASS with disequality tests, the classical argument
of Rackoff~\cite{Rackoff78} easily generalises to show that
control-state reachability remains in EXPSPACE.  By contrast,
decidability of reachability is open to the best of our knowledge.
For comparison, recall that without disequality tests reachability is
decidable but non-elementary~\cite{CzerwinskiLLLM19}.

\clearpage
\bibliography{refs}

\clearpage
\appendix
\section{Proof of the reduction in Figure~\ref{fig:GCD}}
\label{apx:gcd}
Let us recall that for every value $u\in \NN$, the  assignment $\mathrm{val}_u:\set{X_1,\ldots,X_m}\to \set{0,1}$ is defined  by $\mathrm{val}_u(X_i)=1$ if
 and only if $p_i \mid u$.
For convenience,  define the domain $D_s\subseteq \mathbb{N}$ containing  all allowable counter values in state $s$ (exclude all disequality guards on~$s$).

The key observation is the following: let $u\in \{0,\ldots,P-1\}$, and consider a clause $C_i=\ell_{i_1}\vee \ell_{i_2}\vee \ell_{i_3}$, where $\ell_{i_j}$ is a literal of variable $X_{i_j}$, then $\mathrm{val}_u$ satisfies $C_i$ iff there exists some $k\in \NN$ such that $u+k p_{i_1}p_{i_2}p_{i_3}\notin D_i$. 

Indeed, note that for every $j\in \set{1,2,3}$ and every $k\in \NN$ we have that $p_{i_j}|u$ iff $p_{i_j}|u+k p_{i_1}p_{i_2}p_{i_3}$.  
Recall that ${\rm val}_u(X_{i_j})=1$ iff $p_{i_j}|u$, and observe that since $u<P$, there exists $k\in \NN$ such that $u+k p_{i_1}p_{i_2}p_{i_3}\in \{P,P+1,\ldots,P+p_{i_1}p_{i_2}p_{i_3}-1\}$. We thus have that $\mathrm{val}_u$ satisfies $C_i$ iff $\mathrm{val}_{u+k p_{i_1}p_{i_2}p_{i_3}}$ satisfies $C_i$, iff $u+k p_{i_1}p_{i_2}p_{i_3}\notin D_{i}$.

Now, assume $\varphi$ is satisfiable, and let $\pi$ be a satisfying assignment. We associate with $\pi$ the number $u=\prod_{j: \pi(X_j)=1}p_j \pmod{P}$ (note that taking modulo $P$ simply means that if the product is exactly $P$, we take $u=0$). Clearly $\pi=\mathrm{val}_u$. We claim that $(s_0,u)$ is bounded. Indeed, the only paths possible from $(s_0,u)$ start by choosing a state $s_i$, and then repeatedly applying the cycle of cost $c_i$. However, since $\mathrm{val}_u$ satisfies all clauses, then by the above, all such paths are blocked by a disequality guard after taking the $c_i$ for $k$ times, for some $k\in \NN$ (which depends on $i$). Thus, $(s_0,u)$ is bounded.

Conversely, assume $(s_0,u)$ is bounded for some value $u$, we claim that $\mathrm{val}_u$ satisfies $\varphi$. Indeed, by the same reasoning above, it follows that for every cycle of cost $c_i$, we have $u+k c_i\notin D_{i}$ for some $k\in \NN$, so $\mathrm{val}_u$ satisfies $C_i$. Since this is true for all clauses, we have that $\mathrm{val}_u$ satisfies $\varphi$.

We conclude that $\varphi$ is satisfiable iff some configuration $(s_0,u)$ is bounded, which completes the reduction.

Finally, we note that the reduction indeed takes polynomial time --- indeed, the
construction clearly has polynomially many states. Also, the first $m$ primes
$p_1,\ldots, p_m$ can be listed in time polynomial in $m$, and are representable
in polynomially many bits. Therefore, the binary representation of the
transition values and the amount of missing elements in the domain of each state
are both polynomial.

\section{Single disequality guards suffice}\label{apx:guard}
Given a 1-VASS $\VASS=(Q,\Delta,D,w)$ with disequality tests, we can assume  that for all states~$q$ the set $D_q$ is
either~$\mathbb{N}$ or $\mathbb{N}\setminus \{g\}$ for some $g\in\mathbb{N}$.
This assumption is without loss of generality, as  
a state~$q$ with $D_q = \mathbb{N}\setminus \{a_1,\ldots,a_n\}$ 
can be replaced
with a sequence of new states $q_1,\cdots,q_n$, connected with $0$-weight transitions,
such that  $D_{q_i}=\mathbb{N}\setminus \{a_i\}$ for $i\in\{1,\ldots,n\}$.
The transformation yields only a polynomial blow-up in the size of the 1-VASS,
and there is a natural correspondence between runs in the original 1-VASS and
the modified one.

\section{Proof of Lemma~\ref{lem:cov2unbound}}
\label{apx: reduction coverability to unboundedness}
Consider a 1-VASS $\VASS=(Q,\Delta,D,w)$ with disequality tests, and let $s,t\in
Q$. We reduce the Coverability problem to the Unboundedness problem as follows.

  We obtain from $\VASS$ a new 1-VASS $\VASS'$ as follows. First, we remove from
  $\VASS$ all the states that cannot reach $t$ in the underlying graph. Second, we
  introduce a new state $t'$ with a self-loop of weight $+1$, that is reachable
  from $t$ with a transition of weight $0$. The output of the reduction is
  $\VASS'$ with the distinguished state $s$. 

Recall that reachability in directed graphs can be decided in $\NL
  \subseteq \NC^2$, and hence this reduction is $\NC^2$-computable.

  Henceforth assume that $s$ can reach $t$ in the underlying graph of $\VASS$
  (otherwise $s$ cannot cover $t$, and the reduction can output a trivial
  negative instance).  We proceed to prove the correctness of the reduction.

  First, if $(s,0)$ can cover $t$ in $\VASS$, then in particular it can only
  cover $t$ using states in $\VASS'$.  We now have that $(s,0)$ is unbounded in
  $\VASS'$, by covering $t$, and then taking the transition to $t'$ and
  repeating the self loop unboundedly. Note that crucially, there are no
  disequality guards on $t'$, and therefore once $t$ is reached, we can take the
  transition to $t$ and repeat the self loop unboundedly. 

  Conversely, suppose $(s,0)$ is unbounded in $\VASS'$, then either there is a
  valid run in $\VASS$ from $(s,0)$ to $(t',z)$ for some $z$, in which case
  $(s,0)$ can cover $t$ in $\VASS$, or $(s,0)$ is unbounded already in $\VASS$
  and, moreover, it is unbounded in $\VASS$ using only states that can reach $t$
  in the underlying graph. We claim that in the latter case, $(s,0)$ can cover
  $t$ in $\VASS$. Indeed, from $(s,0)$ there is a valid run to a configuration
  $(q,z)$ with $z$ that is large enough, such that a simple path from $q$ to $t$
  in the underlying graph lifts to a valid run from $(q,z)$ to $(t,z')$ for some
  $z'$. Specifically, taking $z>|Q|\cdot W\cdot G$ where $W$ is the maximal
  absolute value of the weight of a transition in $\VASS$, and $G$ is the
  maximal disequality guard, suffices for such a run.

\section{Proof of Proposition~\ref{prop: unbounded iff confplus unbounded}}

Clearly if $(s,0)$ can reach an unbounded configuration in  $\Conf_+$ then it is unbounded.

Conversely, if $(s,0)$ is unbounded, then 
there is a state~$q$ such that for all $z_0\in \mathbb{N}$,
there exist $z,z'\geq z_0$ and  a 
valid  run~$\pi$ starting in $(s,0)$ 
that visits $(q,z)$ and ends in~$(q,z')$.
Thus, there is a positive cycle~$\gamma$ on~$q$.  
The positive cycle~$\gamma$ on~$q$ may not be simple, but 
it certainly visits a state~$p$ with a simple positive cycle~$\gamma_p$
on it.
Pick  $z_0$ such that $z_0>\pmin(\gamma)+x$. 
for all $x\in \blocked(\gamma_p^{\omega})$
(Note that
$\blocked(\gamma_p^{\omega})$ is finite since $\gamma_p$ is a positive cycle.
The maximum is thus well-defined.)
Hence, there is a valid run from $(s,0)$ to $(p,y)$
where $y> \max(\blocked(\gamma_p^{\omega}))$.
Observe that $(p,y)\in \Conf_+$ and it   is unbounded.

\section{Proof of Proposition~\ref{prop:primitive}}
Suppose that $\pi$ has length strictly greater than $\abs{Q}^2+2$.  
By the Pigeonhole principle, we can find $\abs{Q}+1$ distinct proper prefixes (i.e. prefixes that are not just the initial state, or the entire path) of~$\pi$ that end in the
same state. That is, $\abs{Q}$ proper cycles on the same state.  Let $\pi_1,\ldots,\pi_{\abs{Q}+1}$ be a list of
these prefixes, given in order of increasing length, and let the
corresponding suffixes be $\pi'_1,\ldots,\pi'_{\abs{Q}+1}$.  We now
consider two cases.

  First,
  suppose that there exist $i<j$ such that $\mathrm{weight}(\pi_i)$ and
  $\mathrm{weight}(\pi_j)$ have the same residue modulo $W_{q'}$.  Then
  define $\pi':= \pi_i \cdot \pi'_j$.  In this case path 
  $\pi'$ lifts to
  a run from $(q,z)$ to $(q',z'')$ such that $(q',z'')$ lies in
  the same $q'$-residue class as $(q',z')$.  
  The second case is
  that the respective residue classes of $\mathrm{weight}(\pi_1),\ldots,
  \mathrm{weight}(\pi_{\abs{Q}+1})$ modulo~$W_{q'}$ are all distinct. 
  Then there exists $i>1$ such that, defining $\pi':= \pi_1 \cdot \pi'_i$,
  the path $\pi'$ lifts to a run from $(q,z)$ to $(q',z'')$ such
  that $(q',z'')$ lies in a trivial $q'$-residue class (as there are at most
  $|Q|$ non-trivial residue classes).
  
  Continuing in this fashion we can recursively remove cycles from the
  original path $\pi$ to eventually obtain a path $\pi'$ that has length
  at most $\abs{Q}^2 + 2$ and such that Item 3 is satisfied. 
  Consider all
  maximal infixes that were removed from $\pi$ to obtain $\pi'$. Note that each
  such infix must necessarily be a cycle as they arise from iteratively removing
  cycles. Since $\pi$ was primitive, all of them must have
  non-positive weight. Hence, Items 1 and 2 also hold\footnote{Note that we do not claim that the intermediate paths obtained in the  procedure  are primitive nor that the individual cycles removed in this process are negative. Rather the observation is  that $\pi'$ can equivalently be obtained from $\pi$ in one step by simultaneously removing a disjoint family of infixes, where each infix is a cycle (necessarily non-positive).}.

\section{Proof of Lemma~\ref{lem: delta(C) not much bigger}}

  Consider two ``consecutive'' configurations $(q,z),(q,z+W_q)\in C\setminus
  U_n$, then all configurations $(q,z')$ for $z\le z'<z+W_q$ lie in
  pairwise-distinct $q$-residue classes. In particular, since there are at most
  $|Q|$ non-trivial residue classes, and since trivial residue classes are
  contained in $U_0$, we have that at most $|Q|$ such elements are in
  $\delta_n(C)$.

\section{Proof of Proposition~\ref{prop:primitive-2}}
\label{apx: primitive-2 proof}
By Proposition~\ref{prop:holes} we can find a polynomial $\mathrm{poly'}_7$ such that
\begin{gather}
  \mathrm{poly'}_7(\abs{Q}) \geq \abs{Q}^2 + \abs{Q} + 3 + \sum_{R \text{
  non-trivial}} |\delta_n(R)| 
\label{eq:poly-bound}
\end{gather} for all $n \in \mathbb{N}$.

Set $\poly_7(|Q|):=|Q|\cdot (\poly'_7(|Q|))^2+|Q|^2 + 4$, 
and consider a valid, primitive path $\pi$ such that 
$\mathrm{length}(\pi) > \poly_7(|Q|)$ 
and $(q,z)\stackrel{\pi}{\to} (q',z')$.

Since $\pi$ has length greater than~$|Q| \cdot (\poly'_7(|Q|))^2+2$, there
exists a state $q''\in Q$ that occurs at least $(\poly'_7(|Q|))^2$ times in internal
configurations within the first $|Q| \cdot (\poly'_7(|Q|))^2+2$ configurations
of $\pi$.
Thus, there exists a sequence of proper prefixes $\pi_1 < \ldots <
\pi_{\poly'_7(|Q|)}$ of $\pi$ that all end in~$q''$ and such that one of the
following two cases holds.
\begin{enumerate}[(i)]
  \item The numbers $\mathrm{weight}(\pi_i)$ all have the same residue
    modulo~$W_{q'}$.  
  \item The numbers $\mathrm{weight}(\pi_i)$  have pairwise distinct residues
    modulo~$W_{q'}$.
\end{enumerate}
Indeed, since there are $(\poly'_7(|Q|))^2$ prefixes to choose from, either Case
(i) holds, or there are strictly less than $\poly'_7(|Q|)$ prefixes per residue
class. If the latter holds then there must be least $\poly'_7(|Q|)$ such
distinct residue classes, so Case (ii) holds.

In either case, we decompose the computation $\pi$ as $\pi = \pi_{\poly'_7(|Q|)}
\cdot \pi'$. Observe that since $\pi$ is primitive, then
so is $\pi'$.  Applying Proposition~\ref{prop:primitive} to $\pi'$ we obtain a
path $\pi''$ of length at most $|Q|^2 + 1$ such that $\pi_{\poly'_7(|Q|)} \cdot
\pi''$ leads from $(q,x)$ to either the same residue class as $(q',z')$ or to a
trivial $q'$-residue class.  

It is important to note that we cannot assume $\pi''$ is not blocked after the prefix
$\pi_{\poly'_7(|Q|)}$. However, since $|\blocked(\pi'')|\le |Q|^2$, we can
remove from the list of prefixes at most $|Q|^2$ prefixes such that the
remaining prefixes do not cause $\pi''$ to block. (Indeed, we will not modify
the path by literally removing prefixes but rather cycles which correspond to the
path from a prefix to a longer prefix. For now, we are only speaking about
removing elements from the collection of prefixes we can choose from.) W.l.o.g,
let $\pi_1,\ldots \pi_{d}$ be the remaining prefixes.


Consider the family of paths $\theta_i:=\pi_i \cdot \pi''$
for $i\in\{1,\ldots,d\}$. Note that every $\theta_i$ is of length at most
$\poly_7(|Q|)$, and since the $\theta_i$ are obtained by removing $q''$-cycles,
and since $\pi$ is primitive, the configurations reached by $\theta_i$ are
above $(q',z')$.  We claim that one of the $\theta_i$ is a valid run from
$(q,z)$ to $U_n$.

We separate the analysis according to the cases above. 
\begin{itemize}
  \item In Case (i), if $\pi''$ leads to a trivial residue class, then all the
    $\theta_i$ reach $U_n$, and we are done. Otherwise, $\pi''$ leads to the
    same residue class as $(q',z')$. By our choice of $\poly'_7(|Q|)$
    in~\eqref{eq:poly-bound}, we have that $d>\sum_{R \text{
    non-trivial}}|\delta_n(R)|$.
    That is, there are more prefixes that do not cause $\pi''$ to block than
    there are missing elements above $(q',z')$ in $U_n$. We conclude that some
    $\theta_i$ reaches $U_n$.
  \item In Case (ii), the paths $\theta_i$ all reach distinct residue classes.
    In particular, since there are more than $|Q|$ such prefixes --- i.e.  $d >
    |Q|$ by our choice of $\poly'_7(|Q|)$ --- then some $\theta_i$ reach trivial
    residue classes, and thus reach $U_n$.
\end{itemize}
%

\section{Proof of Proposition~\ref{prop:ptimepath}}
  We carry out a forward reachability analysis starting from the
  initial configuration~$(q_0,x_0)$.   The algorithm runs for
    $L+1$ rounds.  In the $k$-th round, we maintain for each state~$q$
    a set~$S_{q,k}$ of configurations~$(q,x)$ that are reachable
    from~$(q_0,x_0)$ by valid runs of length~$k$.  Let $R_{q,k}$
    denote the set of all configurations $(q,x)$ that are reachable
    from~$(q_0,x_0)$ by valid runs of length~$k$.  We maintain the
    invariant that if some configuration $(q,x) \in R_{q,k}$ can reach
    the objective $O$ in $L-k$ steps via a path $\pi$ then some
    configuration $(q,x') \in S_{q,k}$ can also reach $O$ via the same
    path $\pi$.  We output that the objective is reachable if and only
    if one of the sets $S_{q_f,k}$ for some $k\in \{0,\ldots,L\}$
    intersects $O$.  This last step is clearly sound, given the
    invariant.

The key to obtaining a polynomial-time runtime bound is to suitably prune the
sets~$S_{q,k}$ to keep them of polynomial size.  In order to compute
$\{S_{q,{k+1}}\}_{q\in Q}$ from $\{ S_{q,k} \}_{q\in Q}$ we proceed as
follows.  First define $\{ S'_{q,k} \}_{q\in Q}$ to be the indexed set
of all valid configurations reachable in one step from
$\{ S_{q,k} \}_{q\in Q}$.  Now we obtain $S_{q,k+1}$ from $S'_{q,k}$
by the following two steps:
\begin{itemize}
	\item First, we delete from~$S'_{q,k}$ all configurations~$(q,x)$ such that
	there are $(n+L)$ configurations~$(q,x')$ in $S'_{q,k}$ with
	$x'> x$ and $x' \equiv x\pmod{W}$. 
	\item Secondly,  we delete from~$S'_{q,k}$ all configurations~$(q,x)$ such that
	there are $(n+L)(m+1)$ configurations~$(q,x')$ in $S'_{q,k}$ with $x'>x$.
\end{itemize}

Clearly each set $S_{q,k}$ has cardinality at most~$(n+L)(m+1)$, and
moreover, it can be computed from the collection of sets
$\{S_{q',k-1} \mid q'\in Q\}$ in polynomial time.

It remains to argue that the invariant is maintained between rounds.
To this end, suppose some state $(q,x) \in R_{q,k+1}$ can reach the
objective in $L-k-1$ steps via a path $\pi$.  Then there exists a
state $(q',x') \in R_{q',k}$ that can reach the objective in $L-k$
steps via the path $q'\pi$.  By the loop invariant there exists a state
$(q',x'') \in S_{q',k}$ that can also reach the objective via the path
$q'\pi$.  Hence there is a state $(q,y) \in S'_{q',k}$ that can reach
the objective via the path $\pi$.  Now if $(q,y)$ is deleted in the first
stage of pruning then there is some configuration $(q,y')$ such that
$y'>y$, $y' \equiv y \pmod{W}$, and $\pi$ yields a valid computation
from $(q,y')$ to the objective $O$.  After the first stage of pruning,
each residue class in $S'_{q,k}$ contains at most $n+L$ elements.
Hence if $(q,y')$ is deleted in the second stage of pruning, there are
at least $n+L$ configurations $(q,y'')$ in $S_{q,k+1}$ that are above
$(q,y')$ and are such that the run over $\pi$ from $(q,y')$ leads to a
configuration $(q_f,z)$ with
$\bigwedge_{i=1}^m {z \not\equiv a_i} \bmod W$.  Now from one of these
configurations $\pi$ yields a valid run that reaches $O$ since 
one of $n+L$ choices of $(q,y'')$ will avoid $\mathrm{blocked}(\pi)$ and lead
to a configuration $(q_f,z)$ such that $\bigwedge_{i=1}^n z\neq b_i$.

\section{Proof of Lemma~\ref{lem:simple}}
  Items 1 and 2 are obvious.  Item 3 follows from the fact that if
  $\pi_1 \in P$ is dominated by $\pi'_1 \in P'$ and
  $\pi_2 \in R$ is dominated by $\pi'_2 \in R'$ then
  $\pi_1 \cdot \pi_2$ is dominated by $\pi_1' \cdot \pi_2'$.  Indeed,
\begin{eqnarray*}
\mathrm{pmin}(\pi_1 \cdot \pi_2) & = &
\min(\mathrm{pmin}(\pi_1),\mathrm{weight}(\pi_1)+\mathrm{pmin}(\pi_2))\\
& \leq &
\min(\mathrm{pmin}(\pi'_1),\mathrm{weight}(\pi'_1)+\mathrm{pmin}(\pi'_2))\\
& = & \mathrm{pmin}(\pi'_1 \cdot \pi'_2) \, .
\end{eqnarray*}
We can similarly argue that
$\mathrm{smax}(\pi_1 \cdot \pi_2) \leq \mathrm{smax}(\pi'_1 \cdot
\pi'_2)$.

\section{Proof of Proposition~\ref{prop:filter}}

  Fix a state $r\in Q$. Consider all $p$-$r$ paths that appear as a
  minimal prefix of some path in $P$.  Pick a single such prefix
  $\pi_1$ of maximum weight.  Likewise consider all $r$-$q$ paths that
  appear as a maximal suffix of some path in $P$ and pick a single
  such suffix $\pi_2$ of maximum weight.  Now form the path
  $\pi:=\pi_1 \cdot \pi_2$.  This path dominates any path in $P$ with
  nadir $r$.  We define $P'$ to be the set of paths $\pi$ formed in
  this way as $r$ runs through $Q$. 
  By taking $k$ large enough, we can suppose without loss of generality, that the absolute weight of all paths in $P'$ is at most $2^k$.
  That is, it can be encoded in binary using $k+1$ bits.

  The $\NC^1$ bound on computing $P'$ relies on the well-known
  fact that the sum of a list of binary integers can be computed in
  $\NC^1$~\cite[Chapter 1]{vollmer99}. To obtain $P'$ we compute the
  weight of each prefix and suffix of every path in $P$ in parallel.
According to~\cite{vollmer99}, this can be done in time $O(\log k)$
  on a parallel computer with $|P|k$ processors: one for each element of $P$ and
  each midpoint $0 \leq m \leq k$. Finally, for each state $r\in Q$ in parallel, we
  find a maximum-weight prefix of a path in $P$ that connects $p$ and $r$ and a
  maximum-weight suffix of a path in $P$ that connects $r$ and $q$. It is
  straightforward to prove the latter is also in $\NC^1$ since sorting a list
  of numbers can be done in $\NC^1$,~\cite{preparata78,bc94}
  thus completing the proof.


\end{document}